\newtheorem{theorem}{Theorem}[section]
\newtheorem{lemma}[theorem]{Lemma} 
\newtheorem{prop}[theorem]{Proposition}
\theoremstyle{definition}
\theoremstyle{remark}
\numberwithin{equation}{section}
 \def\idty{{\mathchoice {\mathrm{1\mskip-4mu l}} {\mathrm{1\mskip-4mu l}} %
{\mathrm{1\mskip-4.5mu l}} {\mathrm{1\mskip-5mu l}}}}
\newcommand{\bE}{{\mathbb E}}
\newcommand{\bF}{{\mathbb F}}
\newcommand{\bR}{{\mathbb R}}
\newcommand{\Rl}{\mathbb{R}}
\newcommand{\Ir}{\mathbb{Z}}
\newcommand{\Cx}{\mathbb{C}}
\newcommand{\cA}{{\mathcal A}}
\newcommand{\cB}{{\mathcal B}}
\newcommand{\cH}{{\mathcal H}}
\newcommand{\cP}{{\mathcal P}}
\newcommand{\cL}{{\mathcal L}}
\newcommand{\tu}{\tilde{u}}
\newcommand{\supp}{\operatorname{supp}}
\newcommand{\tr}{\mathrm{tr}}
\newcommand{\ran}{{\rm ran}}
\newcommand{\id}{{\rm id}}
\newcommand{\infspec}{\mathrm{inf\, spec}}
\renewcommand{\span}{\mathrm{span}}
\newcommand{\be}{\begin{equation}}
\newcommand{\ee}{\end{equation}}
\newcommand{\bea}{\begin{eqnarray}}
\newcommand{\eea}{\end{eqnarray}}
\newcommand{\beann}{\begin{eqnarray*}}
\newcommand{\eeann}{\end{eqnarray*}}
\newcommand{\eq}[1]{(\ref{#1})}
\begin{document}

\title[Lieb-Robinson bounds for lattice fermion systems]{Lieb-Robinson bounds, the spectral flow, and stability of the spectral gap for lattice fermion systems}


\author[B. Nachtergaele]{Bruno Nachtergaele}
\address{Department of Mathematics and Center for Quantum Mathematics and Physics\\
University of California, Davis\\
Davis, CA 95616, USA}
\email{bxn@math.ucdavis.edu}
\thanks{Based on work supported by the National Science Foundation under Grant DMS-1515850.}

\author[R. Sims]{Robert Sims}
\address{Department of Mathematics \\
University of Arizona\\
Tuscon, AZ 85721, USA}
\email{rsims@math.arizona.edu}
\thanks{}

\author[A. Young]{Amanda Young}
\address{Department of Mathematics \\
University of Arizona\\
Tuscon, AZ 85721, USA}
\email{amyoung@math.arizona.edu}
\thanks{}

\subjclass[2010]{Primary 82B10; Secondary 82C10, 82C20}

\date{April 1, 2017}

\begin{abstract}
We prove Lieb-Robinson bounds for a general class of lattice ferm\-ion systems. By making use of a suitable
conditional expectation onto subalgebras of the CAR algebra, we can apply the Lieb-Robinson bounds much in the same way as
for quantum spin systems. We preview how to obtain the spectral flow automorphisms and to prove stability of the spectral gap 
for frustration-free gapped systems satisfying a Local Topological Quantum Order condition.
\end{abstract}

\maketitle

\section{Introduction}\label{sec:introduction}

In the past dozen years, a considerable number of mathematical results on quantum spin systems 
made use of Lieb-Robinson Bounds \cite{lieb:1972} in an essential way. These include extensions
of the Lieb-Schultz-Mattis theorem to higher dimensions \cite{hastings:2004, nachtergaele:2007},
the exponential decay of correlations in gapped ground states \cite{nachtergaele:2006a,hastings:2006},
Area Laws for the entanglement entropy \cite{hastings:2007, brandao:2015}, construction of the spectral
flow and adiabatic theorems \cite{bachmann:2012,bachmann:2017c,bachmann:2017}, stability results for 
gapped ground states \cite{bravyi:2010,michalakis:2013} and more.

It is quite clear that in many cases it should be
possible to answer the same type of questions for lattice fermion systems based on the same principles
that apply to quantum spin systems. Indeed, there are several special cases where the close analogy 
between lattice systems of spins and of fermions allowed for the successful application of Lieb-Robinson
bounds. For example, a proof of the quantized Hall effect for interacting lattice fermions \cite{hastings:2015, giuliani:2017},
and linear response theory for lattice fermion systems \cite{bru:2017}.

In this paper, we discuss several crucial ingredients which enter the
proof of stability for gapped, frustration-free models of lattice fermions.
First, we prove a Lieb-Robinson bound for a general class of models.
Variants of this quasi-locality estimate enter the proof of stability in many
different stages. Next, we introduce a conditional expectation 
which enables strictly local approximations of quasi-local observables; this 
can be seen as an analogue of the normalized partial trace familiar in the 
context of quantum spin systems. Then, we provide a version of the martingale method
suitable to prove lower bounds for the spectral gap of fermion models. Equipped with this,
a wealth of potential unperturbed models can be explored. After giving a few specific
classes of gapped fermion models, we overview the main tool of analysis for stability:
the spectral flow. More detailed estimates and additional results will be provided in forth-coming papers
\cite{moon:inprep,young:inprep}.

The methods discussed in this paper can be applied to the Aubry-Andr\'e model studied by Mastropietro
in a recent series of papers \cite{mastropietro:2015,mastropietro:2016,mastropietro:2017}. Both the molecular 
and the free fermion limit can be shown to be stable under general, uniformly small, short-range perturbations. 
This is more general than the class of perturbations studied by Mastropietro, but the general result is weaker.
In particular the renormalization group method of Mastropietro takes into account the quasi-periodicity of the 
potential, in a way that allows for an estimate of the correlation length that is indicative of many-body localization.  
The details of the interactions are ignored by the general approach discussed here and adapting the method
to study many-body localization will require further research.

\section{Lattice fermion systems}\label{sec:fermions}

Spinless fermions on a countable set $\Gamma$, which is often referred to as `the lattice', 
are described by the CAR algebra $\cA_\Gamma=\mbox{CAR}(\ell^2(\Gamma))$. 
$\cA_\Gamma$ is the $C$*-algebra generated by 
creation and annihilation operators $a^*_x,a_x$, $x\in\Gamma$, which satisfy the canonical 
anti-commutation relations, i.e. the CAR:
\begin{equation} \label{CAR}
\{ a_x, a_y \} = \{a_x^*, a_y^* \} = 0 \quad \mbox{and} \quad \{a_x, a_y^* \} = \delta_{x,y} \idty \quad \mbox{for any } x,y \in \Gamma \, .
\end{equation}
Here $\{ A, B \} = AB+BA$ denotes the anti-commutator of $A$ and $B$. 
As is discussed in detail in \cite{bratteli:1997}, this CAR algebra can be represented as
the collection of bounded linear operators over the Hilbert space
corresponding to the anti-symmetric Fock space
generated by $\ell^2( \Gamma)$. 
Furthermore, we note that spin and/or band indices can be included in this description by 
extending $\Gamma$, for example, to $\tilde \Gamma = \Gamma \times
\{1,\ldots,n\}$.

For $X\subset \Gamma$, $\cA_X$ is naturally embedded as a subalgebra of $\cA_\Gamma$ by 
identifying $\ell^2(X)$ with the subspace of $\ell^2(\Gamma)$ consisting of the functions that 
vanish on $\Gamma\setminus X$. Let $\cP_0(\Gamma)$ denote the set of finite subsets of $\Gamma$.
For any $X \in \cP_0( \Gamma)$, it is useful to define the parity automorphism of $\cA_X$, which we denote by $\Theta_X$, by setting
\be \label{Parity_Op}
\Theta_X (A) = (-1)^{N_X} A (-1)^{N_X} \quad \mbox{for any } A\in \cA_X \, .
\ee
Here $N_X  = \sum_{x\in X} a^*_x a_x$ is the local number operator. 
Using the quasi-local structure of $\cA_\Gamma$, we see that 
there is a unique automorphism $\Theta$ of $\cA_\Gamma$ for which $\Theta\restriction_{\cA_X} = \Theta_X$ for any $X\in\cP_0(\Gamma)$. It is clear that $\Theta^2 = \id$.
The {\em even} and {\em odd} elements of $\cA_\Gamma$
are the eigenvectors of $\Theta$ with eigenvalue $1$ and $-1$, respectively. By $\cA_\Gamma^+$ and 
$\cA_\Gamma^-$, we denote the corresponding eigenspaces. Similarly, for any $\Lambda \in \cP_0( \Gamma)$, 
we set $\cA_{\Lambda}^+$ and $\cA_{\Lambda}^-$ to be the even and odd eigenspaces of 
$\Theta_{\Lambda}$ on the subalgebra $\cA_\Lambda$. Note that $\cA_\Lambda^+$  is a $C^*$-subalgebra of $\cA_\Lambda$, but  $\cA_\Lambda^-$
is not a subalgebra. In fact, it is immediate that $(\cA_\Lambda^-)^2\subset \cA_\Lambda^+$.

A convenient basis for the local subalgebra $\cA_{\Lambda}$ is the one consisting of all monomials. Recall that 
$A \in \cA_{\Lambda}$ is a monomial if
\begin{equation} \label{monomials}
A = \prod_{x \in \Lambda} A_x \quad \mbox{with} \quad A_x \in \{ \idty, a_x, a_x^*, a_x^*a_x \} \, .
\end{equation}
Since each monomial is either even or odd, we conclude that any $A \in \cA_{\Lambda}$ can
be written as $A = A^++A^-$ where $A^{\pm} \in \cA_{\Lambda}^{\pm}$ and with each of $A^{+}$, resp. $A^{-}$, 
being a linear combination of even, resp. odd, monomials. Many of the results we will present
depend on the support of the observable under consideration. The following proposition describes some useful commutation properties
related to support.

\begin{prop} \label{prop:dis_sup} Let $X,Y \in \cP_0( \Gamma)$ with $X \cap Y = \emptyset$. 
\newline (i) If $A \in \cA_X^+$ and $B \in \cA_Y$, then $[A, B] = 0$. Moreover, if $A \in \cA_X$, $B \in \cA_Y$, and $[A,B] = 0$, then
either $A \in \cA_X^+$ or $B \in \cA_Y^+$.
\newline (ii) If $A \in \cA_X^-$ and $B \in \cA_Y^-$, then $\{A, B\} = 0$. Moreover, if $A \in \cA_X$, $B \in \cA_Y$, and $\{A,B\} = 0$, then
either $A$ or $B$ is identically zero, or  $A \in \cA_X^-$ and $B \in \cA_Y^-$.
\end{prop}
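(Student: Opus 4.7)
The plan is to reduce both parts to a sign count on monomials combined with the linear independence of the monomial basis of $\cA_{X\cup Y}$. For each monomial $M=\prod_{x\in X}A_x$ in $\cA_X$ as in \eqref{monomials}, set $|M|:=\#\{x\in X : A_x\in\{a_x,a_x^*\}\}$ and note that $M\in\cA_X^\pm$ according as $|M|$ is even or odd. Because $X\cap Y=\emptyset$, the CAR \eqref{CAR} implies that any $a_y$ or $a_y^*$ with $y\in Y$ anticommutes with every singleton fermionic factor of $M$ and commutes with any factor $\idty$ or $a_x^*a_x$. Pushing $M$ past a monomial $N$ in $\cA_Y$ one factor at a time therefore yields the sign rule $MN=(-1)^{|M|\,|N|}NM$. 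The forward direction of (i) follows by bilinearity, since $|M|$ even makes the sign $+1$; the forward direction of (ii) follows since two odd monomials give the sign $-1$.

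For the converses I will also use that, since $X\cap Y=\emptyset$, the products $MN$ as $M$ and $N$ range over the monomial bases of $\cA_X$ and $\cA_Y$ reproduce, up to a sign, the monomial basis of $\cA_{X\cup Y}$; in particular, distinct cross products $M_iN_j$ are linearly independent. Decompose $A=A^++A^-$ and $B=B^++B^-$. For (i), the forward statements applied to $A^+$ and to $B^+$ give $[A,B]=[A^-,B^-]$, while the forward direction of (ii) gives $\{A^-,B^-\}=0$. Adding yields $2A^-B^-=0$; expanding both factors in odd monomials and invoking linear independence forces $A^-=0$ or $B^-=0$, as required.

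For the converse of (ii), the same decomposition and the forward statements reduce $\{A,B\}=0$ to
\[
A^+B^+ + A^+B^- + A^-B^+ = 0.
\]
In the monomial expansion, the three summands involve cross products $MN$ whose factors carry parities $(+,+)$, $(+,-)$, and $(-,+)$ respectively; these three parity classes are pairwise disjoint inside the bipartite monomial basis, so each summand must vanish separately. A further monomial expansion of each equation $A^\pm B^{\pm'}=0$ shows, again by linear independence, that it must factor: e.g.\ $A^+B^+=0$ implies $A^+=0$ or $B^+=0$. A brief case analysis then confirms that if $A$ and $B$ are both nonzero, necessarily $A^+=B^+=0$, i.e.\ $A\in\cA_X^-$ and $B\in\cA_Y^-$.

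I expect the main obstacle to be purely bookkeeping: the sign rule $MN=(-1)^{|M||N|}NM$ must be derived carefully, keeping in mind that $|M|$ counts only singleton fermionic factors while the $a_x^*a_x$ entries commute freely across $Y$, and the converses hinge on a precise formulation of the linear independence of the cross products $MN$ inside $\cA_{X\cup Y}$. Once these two ingredients are in hand, the argument is entirely elementary.
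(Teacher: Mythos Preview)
Your proposal is correct and follows essentially the same route as the paper's proof: check the forward statements on monomials via the sign rule $MN=(-1)^{|M||N|}NM$, extend by linearity, then for the converses decompose $A=A^++A^-$, $B=B^++B^-$, reduce to equations like $A^-B^-=0$, and appeal to linear independence of the monomial basis of $\cA_{X\cup Y}$. The paper omits the details of the converse in (ii), saying only that it is ``similar''; your bipartite parity-class argument and short case analysis spell this out correctly and in fact supply more than the paper does.
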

\begin{proof}
The first statements in (i) and (ii) above are easy to check for monomials, and they extend to general observables by linearity. Now, suppose
$A \in \cA_X$, $B \in \cA_Y$, and $[A,B] = 0$. Using the first part of (i), it is clear that
\begin{equation}
0 = [ A, B ] = [ A^+ + A^- , B^+ + B^-] = [A^-, B^-] = 2 A^-B^- - \{ A^-, B^-\}
\end{equation}
and therefore, $A^-B^- =0$ by the first part of (ii). The claim now follows by expanding $A^-$ and $B^-$ in the monomial basis. 
Proving the second part of (ii) is similar.
\end{proof}

An interaction $\Phi$ for a system of fermions on $\Gamma$ is defined similarly to that of an interaction for a quantum spin system.
Specifically, a map $\Phi : \mathcal{P}_0( \Gamma) \to
\mathcal{A}_{\Gamma}$ is an interaction if $\Phi(X)^* = \Phi(X) \in \cA_X$ for
all $X \in \mathcal{P}_0( \Gamma)$. For the results we are interested in here, we will restrict our attention to {\em even} interactions. An interaction $\Phi$
is said to be even if $\Phi(X) \in \cA_X^+$ for all $X\in \cP_0(\Gamma)$.
As such, each term $\Phi(X)$ is itself a sum of terms of the form
\be
a^*_Y h(Y,Z) a_Z + a^*_Z \overline{h(Y,Z)} a_Y, 
\ee
where $Y\cup Z=X$, $|Y|+|Z|$ is even, $h(Y,Z)\in\Cx$, and $a_Y=a_{y_1}\cdots a_{y_k}$ for a suitable enumeration of $Y=\{y_1,\ldots,y_k\}$. 
This assumption of even interactions has a physical justification; it follows from the conservation of angular momentum that the parity of the number of particles with half-integral spin is preserved. In other words, fermions can only be created or annihilated in an even number at a time.

Given an interaction $\Phi$ and $\Lambda \in \mathcal{P}_0( \Gamma)$, a local Hamiltonian 
$H_{\Lambda}^{\Phi}$ is defined by
\begin{equation} \label{local_ham}
H_{\Lambda}^{\Phi} = \sum_{X \subset \Lambda} \Phi(X).
\end{equation}
When the interaction $\Phi$ is understood, we often drop its dependence in the local Hamiltonians.
These finite-volume Hamiltonians generate a local Heisenberg dynamics which is a one-parameter group of automorphisms
of $\cA_\Lambda$:
\be
\tau^\Lambda_t(A) = e^{itH_\Lambda} A  e^{-itH_\Lambda}, \ t\in\Rl, A\in \cA_\Lambda.
\ee

If a dynamics $\tau^\Lambda_t$ is generated by an even interaction, it 
leaves $\cA_\Lambda^\pm$ invariant. In the next section we will show that for
such dynamics that Lieb-Robinson bounds identical to the well-known bounds for
quantum spin systems hold with essentially the same proof.
One could also consider lattice fermion systems with an infinite number of bands, but for simplicity we will not do this 
here. In the finite band case, it is not a loss of generality to assume that the time-dependence of each interaction 
term is continuous in the operator norm, and this allows for a more straightforward presentation. 

\section{Lieb-Robinson bounds}\label{sec:LRbounds}

Lieb-Robinson bounds provide an upper bound on the speed of propagation of disturbances in an extended many-body system.
Such bounds can be proved under quite general conditions on the many-body interactions. In fact, the argument we describe 
below applies to time-dependent interactions as well, and so we introduce this now. Let $I \subset \mathbb{R}$ be an interval;
often $I = \mathbb{R}$, but this is not necessary. An even, time-dependent interaction $\Phi$ is a mapping
$\Phi : \mathcal{P}_0( \Gamma) \times I \to \mathcal{A}_{\Gamma}$ for which
\begin{equation}
\Phi(X,t)^* = \Phi(X,t) \in \mathcal{A}_X^+ \quad \mbox{for all } X \in \mathcal{P}_0( \Gamma) \mbox{ and } t \in I,
\end{equation}
and moreover, 
\begin{equation}
t \mapsto \Phi(X,t) \mbox{ is continuous for each } X \in \mathcal{P}_0( \Gamma) \, .
\end{equation}

Associated to any even, time-dependent interaction $\Phi$ and each $\Lambda \in \mathcal{P}_0(\Gamma)$, there is
a corresponding finite-volume, time-dependent Hamiltonian 
\begin{equation} \label{fv_ham}
H_{\Lambda}(t) = \sum_{X \subset \Lambda} \Phi(X,t) \quad \mbox{for all } t \in I \, .
\end{equation}
As is well-known, see e.g. Theorem X.69 in \cite{reed:1975}, the solution of
\begin{equation} \label{uprop}
\frac{d}{dt} U_{\Lambda}(t,s) = -iH_{\Lambda}(t) U_{\Lambda}(t,s) \quad \mbox{with } U_{\Lambda}(s,s) = \idty \quad \mbox{and } s,t \in I, 
\end{equation}
produces a two-parameter family of unitary propagators $U_{\Lambda}(t,s) \in \mathcal{A}_{\Lambda}$, and in
terms of these unitaries, a co-cycle of automorphisms of $\mathcal{A}_{\Lambda}$ is defined by
\begin{equation} \label{fv_dyn}
\tau_{t,s}^{\Lambda}(A) = U_{\Lambda}(t,s)^* A U_{\Lambda}(t,s) \quad \mbox{for all } A \in \mathcal{A}_{\Lambda} \mbox{ and } t,s \in I \, .
\end{equation}
These automorphisms $\tau_{t,s}^{\Lambda}$ are commonly referred to as the finite-volume
Heisenberg dynamics associated to $\Phi$, and it is to these dynamics that the Lieb-Robinson bounds apply.
As we previously observed in the time-independent case, since $H_\Lambda(t)$ in
\eq{uprop} is even, $U_\Lambda(t,s)$ is even and so the map
$\tau_{t,s}^{\Lambda}$ leaves $\mathcal{A}_{\Lambda}^{+}$ and
$\mathcal{A}_{\Lambda}^{-}$ invariant.

Lieb-Robinson bounds are valid for interactions that decay sufficiently fast. 
A precise formulation of these bounds requires a notion of distance.
For this reason, we will further assume that the countable set $\Gamma$ is equipped with 
a metric $d$. In many physically interesting models, $\Gamma =\Ir^\nu$ for some integer $\nu \geq 1$, and we may, for instance,
take $d$ to be the $\ell^1$-metric. 

It is convenient to express the decay of interactions in terms of a function $G: \Gamma \times \Gamma \to (0,\infty)$ with the following properties:

(i) for all $x,y\in\Gamma$, $G(x,y) = G(y,x)$;

(ii) for all $x,y\in\Gamma$, $\sum_{z\in \Gamma} G(x,z)G(z,y) \leq G(x,y)$;

(iii) $x\mapsto \sum_{z\in \Gamma} G(x,z)$ is a uniformly bounded function on $\Gamma$.

\noindent We will denote the supremum of the bounded function described in (iii) by $\Vert G\Vert$.

Given such a function $G$, we define a set of even interactions, denoted
$\mathcal{B}_G^+(I)$, for which we can prove a Lieb-Robinson bound: an even interaction $\Phi$ belongs to $\mathcal{B}_G^+(I)$ if there is a locally integrable
function $\| \Phi \|_G : I \to [0, \infty)$ for which
\be
\sum_{\substack{Z\in\cP_0(\Gamma)\\ x,y\in Z}} \Vert \Phi(Z,t)\Vert \leq \Vert \Phi\Vert_G(t) G(x,y) \quad \mbox{ for all } x,y\in\Gamma \mbox{ and } t \in I.
\label{normPhi}\ee
Here $\Vert \Phi\Vert_G(t)$ plays the role of a time-dependent norm on the space of interactions. 

Two further definitions are useful in the statement of the Lieb-Robinson bounds. 
For $X \subset \Lambda \subset \Gamma$, we define a collection of surface sets associated to $X$ in $\Lambda$ by
\begin{equation}
S_{\Lambda}(X) = \{ Z \in \cP_0(\Lambda) : Z \cap X \neq \emptyset \mbox{ and } Z \cap ( \Lambda \setminus X) \neq \emptyset \}.
\end{equation}
Furthermore, if $\Phi$ is a time-dependent interaction, then the $\Phi$-boundary of a set $X \in \cP_0(\Gamma)$ is defined by
\be
\partial_\Phi X = \{ x\in X\mid \exists  Z\in S_\Gamma(X), t\in I \mbox{ s.t. } x\in Z, \mbox{ and } \Phi(Z,t) \neq 0\}.
\label{Phiboundary}\ee
If $\Phi$ is long-range, then it is often the case that $\partial_\Phi X = X$, and $\vert\partial_\Phi X\vert$ is not a good measure of
surface effects on the dynamics. However, when $\Phi$ is finite-range, the above definition is of more consequence.  

 \begin{theorem}[Lieb-Robinson Bound for Fermions] 
 \label{thm:lrb_f}
 Let $\Gamma$ be a countable set equipped with a metric $d$ and a function $G$ satisfying (i) - (iii) above,
and let $\Phi \in \cB_G^+(I)$. 
For $\Lambda\in\cP_0(\Gamma)$, let $\tau_{t,s}^\Lambda$ be a finite-volume dynamics associated to $\Phi$, 
 as defined in (\ref{fv_dyn}). Let $X,Y \subset \Lambda$ with $X \cap Y = \emptyset$. 
 \newline (i) If $A \in \cA_X$, $B \in \cA_Y$, and $[A,B]=0$, then 
 \begin{equation} \label{lrb_c_est}
 \left\| \left[ \tau_{t,s}^{\Lambda}(A), B \right] \right\| \leq 2 \| A \| \| B \| \left( \exp \left[ 2  \int_s^t \Vert\Phi\Vert_G(r) \, dr \right] - 1 \right) \sum_{x \in \partial_{\Phi} X} \sum_{y \in Y} G(x,y)
 \end{equation}
for all $t,s \in I$, $s\leq t$. 
\newline (ii) If $A \in \cA_X$, $B \in \cA_Y$, and $ \{ A,B \}=0$, then for any $s\leq t$, one has that
 \begin{equation} \label{lrb_ac_est}
 \left\| \left\{ \tau_{t,s}^{\Lambda}(A), B \right\} \right\| \leq 2 \| A \| \| B \| \left( \exp \left[ 2 \int_s^t \Vert\Phi\Vert_G(r) \, dr \right] - 1 \right) \sum_{x \in \partial_{\Phi} X} \sum_{y \in Y} G(x,y).
 \end{equation}
 \end{theorem}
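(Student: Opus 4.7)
The plan is to adapt the Nachtergaele--Sims proof for quantum spin systems to the CAR setting, with Proposition~\ref{prop:dis_sup} supplying the fermionic analog of ``disjoint supports commute.'' I describe part (i) in detail; part (ii) is analogous.

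First, I use an interaction-picture trick to eliminate the ``interior'' terms of $H_X(t) := \sum_{Z\subset X}\Phi(Z,t)$. Let $V_X(t,s)\in\cA_X^+$ be the unitary propagator generated by $H_X(t)$, and set
\begin{equation*}
\tilde A(t) := U_\Lambda(t,s)^*\,V_X(t,s)\,A\,V_X(t,s)^*\,U_\Lambda(t,s).
\end{equation*}
Then $\tilde A(s)=A$, and a direct differentiation gives
\begin{equation*}
\tilde A'(t) = i\,\tau_{t,s}^\Lambda\bigl([H_\Lambda(t)-H_X(t),\;V_X(t,s)AV_X(t,s)^*]\bigr).
\end{equation*}
Since $V_X(t,s)AV_X(t,s)^*\in\cA_X$ and every $\Phi(Z,t)$ is even, Proposition~\ref{prop:dis_sup}(i) kills every term with $Z\cap X=\emptyset$; combined with the $H_X$-subtraction, only $Z\in S_\Lambda(X)$ with $\Phi(Z,t)\not\equiv 0$ survive, and every $x\in X$ appearing in such a $Z$ lies in $\partial_\Phi X$.

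Next, I integrate and use $[\tilde A(s),B]=[A,B]=0$. Applying the Jacobi identity $[[C,D],B]=[C,[D,B]]+[[C,B],D]$ inside each integrand, with $C=\tau_{r,s}^\Lambda(\Phi(Z,r))$ and $D=\tilde A(r)$, and taking norms yields the recursive inequality
\begin{equation*}
\|[\tilde A(t),B]\| \le 2\int_s^t\!\sum_{Z\in S_\Lambda(X)}\!\Bigl(\|\Phi(Z,r)\|\,\|[\tilde A(r),B]\| + \|A\|\,\|[\tau_{r,s}^\Lambda(\Phi(Z,r)),B]\|\Bigr)dr.
\end{equation*}
The crucial step is to iterate this inequality \emph{as a whole}, re-applying the same estimate to each $\|[\tau_{r,s}^\Lambda(\Phi(Z,r)),B]\|$. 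At the $n$-th iteration one obtains a sum over chains $X\to Z_1\to\cdots\to Z_n$ with $Z_{i+1}\in S_\Lambda(Z_i)$, terminated by the trivial bound $2\|\Phi(Z_n,r_n)\|\|B\|$ once $Z_n\cap Y\ne\emptyset$. Each spatial link is controlled by $\sum_{Z\ni x,y}\|\Phi(Z,r)\|\le\|\Phi\|_G(r)\,G(x,y)$, and hypothesis~(ii) on $G$ collapses the $n$-fold convolution to a single factor $G(x,y)$ with $x\in\partial_\Phi X$ and $y\in Y$. The temporal integrals telescope into $(2\int_s^t\|\Phi\|_G(r)dr)^n/n!$, so summing $n\ge 1$ produces the prefactor $\exp(2\int_s^t\|\Phi\|_G(r)dr)-1$. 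Finally, since $V_X(t,s)$ is even and unitary in $\cA_X$, the substitution $A'=V_X(t,s)AV_X(t,s)^*$ preserves both $\|A\|$ and the condition $[A',B]=0$, so the bound on $\|[\tilde A(t),B]\|$ transfers directly to $\|[\tau_{t,s}^\Lambda(A'),B]\|$.

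For part (ii), Proposition~\ref{prop:dis_sup}(ii) forces $A\in\cA_X^-$ and $B\in\cA_Y^-$, so $\tilde A(r)$ remains odd. The construction of $\tilde A$ is unchanged, but in place of the ordinary Jacobi identity one uses the graded version $\{[C,D],B\}=[C,\{D,B\}]-\{D,[C,B]\}$ (valid when $C$ is even and $D,B$ are odd) to obtain the same integral inequality with $\{\cdot,\cdot\}$ replacing $[\cdot,\cdot]$ on the left and initial condition $\{A,B\}=0$. The inner quantity $\|[\tau_{r,s}^\Lambda(\Phi(Z,r)),B]\|$ is still an ordinary commutator since $\Phi(Z,r)$ is even, so the iteration reduces to the estimate proved in part (i). The main technical hurdle in both parts is the iteration: a naive Gronwall bound on the first term would introduce a spurious factor exponential in $|X|$, and avoiding this requires treating the diagonal and off-diagonal contributions together so that they telescope into the convergent exponential series while the $G$-convolutions collapse via hypothesis~(ii) into the single factor $\sum_{x\in\partial_\Phi X,\,y\in Y}G(x,y)$.
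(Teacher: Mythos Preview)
Your overall strategy---interaction picture, Jacobi identity, iteration over chains of sets $Z_1\to Z_2\to\cdots$, and collapsing the $G$-convolutions---mirrors the paper's proof closely. The gap is in the integral inequality you derive. After the Jacobi identity you obtain
\[
\frac{d}{dt}[\tilde A(t),B] \;=\; i\Bigl[\,W(t),[\tilde A(t),B]\,\Bigr] \;-\; i\sum_{Z\in S_\Lambda(X)}\Bigl[\tilde A(t),\,[\tau^\Lambda_{t,s}(\Phi(Z,t)),B]\Bigr],
\]
with $W(t)=\sum_{Z\in S_\Lambda(X)}\tau^\Lambda_{t,s}(\Phi(Z,t))$. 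You then estimate \emph{both} terms in norm and integrate, which produces your two-term recursion. But the first term, bounded by $2\sum_{Z}\|\Phi(Z,r)\|\,\|[\tilde A(r),B]\|$, carries a coefficient of order $|\partial_\Phi X|$, and any Gronwall-type or iterative treatment of it yields a prefactor like $\exp\bigl(2\int_s^t\sum_{Z\in S_\Lambda(X)}\|\Phi(Z,r)\|\,dr\bigr)$, which is \emph{not} uniform in $X$. Your claim that ``treating the diagonal and off-diagonal contributions together so that they telescope'' avoids this is not an argument: the diagonal term is tied to the fixed set $X$, not to a chain, and there is nothing for it to telescope against.

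The fix is to notice that the diagonal term should never appear in the norm bound at all. Since $W(t)=W(t)^*$, the ODE $f'(t)=i[W(t),f(t)]+B(t)$ has the explicit solution $f(t)=V(t)^*\bigl(f(s)+\int_s^t V(r)B(r)V(r)^*\,dr\bigr)V(t)$ for a suitable unitary propagator $V$, whence $\|f(t)\|\le\|f(s)\|+\int_s^t\|B(r)\|\,dr$. This is the content of Lemma~\ref{lem:normbd} in the paper. Applied with $f(t)=[\tilde A(t),B]$, it absorbs the $i[W,\,\cdot\,]$ term for free and yields the \emph{one-term} recursion
\[
\|[\tau^\Lambda_{t,s}(A),B]\|\;\le\;\|[\tau^X_{t,s}(A),B]\| \;+\; 2\|A\|\sum_{Z\in S_\Lambda(X)}\int_s^t\|[\tau^\Lambda_{r,s}(\Phi(Z,r)),B]\|\,dr,
\]
which iterates exactly as you describe, with no $|X|$-dependent factor. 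Once you insert this step, the rest of your outline---including the graded Jacobi identity for part~(ii) and the reduction of~(ii) to the commutator recursion of~(i)---matches the paper's proof.
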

 
 A number of remarks are in order.
 
First, the proof of this theorem mimics techniques that are well-known in the
context of quantum spin systems, i.e. tensor product algebras. In fact, 
given Lemma~\ref{lem:normbd} and Lemma~\ref{lem:normpresapp}, both proved below, 
the proof of Lieb-Robinson bounds in both settings proceeds almost identically.
This was noted already in \cite{bru:2017}, where a variant of  Theorem
\ref{thm:lrb_f} (i) is proved along the same lines as what we present here
(see \cite{bru:2017}[Theorem 5.1-Corollary 5.2(ii)]. It has been known for some
time that to treat odd observables for fermionic systems, one should consider anti-commutators in addition to 
commutators. See, e.g., the discussion in \cite{hastings:2006}.

Next, as explained in Proposition~\ref{prop:dis_sup}, the assumptions on
the observables above can be re-formulated: for (\ref{lrb_c_est}), one has assumed that
either $A$ or $B$ is even, whereas for (\ref{lrb_ac_est}), excepting trivialities, one 
has assumed that both $A$ and $B$ are odd. 

Further, the existence of a function $G$ satisfying properties (i)-(iii) above is, implicitly, a condition on $\Gamma$.
In many applications, the function $G$ is defined as a function of the distance between sites. 
More precisely, let $F:[0,\infty)\to(0,\infty)$ be a non-increasing function with the following two properties:

(iv) $F$ is uniformly integrable on $\Gamma$, i.e.
\begin{equation} \label{F:int}
\| F \| = \sup_{x \in \Gamma} \sum_{y \in \Gamma} F(d(x,y)) < \infty,
\end{equation}
and

(v) $F$ satisfies the convolution condition
\begin{equation} \label{F:conv}
C = \sup_{x,y \in \Gamma} \sum_{z \in \Gamma} \frac{F(d(x,z))F(d(z,y))}{F(d(x,y))} < \infty.
\end{equation}
Any function $F$ as above is called an $F$-function on $\Gamma$. Moreover, it is clear that, in terms of any such $F$, we can define a function 
$G$ with the properties (i)-(iii), by setting $G(x,y) = C^{-1} F(d(x,y))$.

For technical estimates, it is often convenient to consider classes of decay functions on $\Gamma$. Note 
that if $F$ is an $F$-function on $\Gamma$, then
for any subadditive function $f:[0,\infty)\to [0,\infty)$, i.e., $f(r+s) \leq f(r) + f(s)$ for all $r,s\in [0,\infty)$,
the function $F_f(r) = e^{-f(r)} F(r)$ also satisfies (iv) and (v) with $\| F_f \| \leq \| F \|$ and $C_f \leq C$. 
Similarly, given $G$ satisfying (i)-(iii) and any $g:\Gamma\to (0,1]$, the function $G_g(x,y) = g(x) g(y) G(x,y)$ also satisfies (i)-(iii). 
In this latter case, the function $g$ can be used to introduce a spatial dependence
in the decay of the interaction.

For $\Gamma = \mathbb{Z}^{\nu}$ and $d(x,y) = |x-y|$, i.e. the
$\ell^1$-distance, a typical example of an $F$-function is given by  
\begin{equation}
F(r) = \frac{1}{(1+r)^{\nu + \epsilon}}
\end{equation}
where $\epsilon >0$ can be arbitrary. In fact, it is clear that this $F$-function is uniformly integrable, i.e. (\ref{F:int}) holds, and
moreover, for (\ref{F:conv}) one may take
\begin{equation}
C = 2^{\nu + \epsilon} \| F \| \, .
\end{equation}
In combination with $f(r) = a r$, $a> 0$, we obtain a useful family of
$F$-functions $F_a$ given by $F_a(r)=e^{-ar}/(1+r)^{\nu+\epsilon}$.

Before proving Theorem~\ref{thm:lrb_f}, we state two simple lemmas. 
First, in Lemma~\ref{lem:normbd} below, we prove a basic estimate for solutions of certain 
$\cB(\cH)$-valued differential equations. Next, Lemma~\ref{lem:normpresapp} summarizes an application of Lemma~\ref{lem:normbd} 
demonstrating a one-step locality estimate for the dynamically evolved quantities of interest. 
Theorem~\ref{thm:lrb_f} will then follow by iterating the result from
Lemma~\ref{lem:normpresapp}.

We start with the following solution estimate.

\begin{lemma} \label{lem:normbd}  Let $\cH$ be a complex Hilbert space, $I \subset \mathbb{R}$ an interval, and 
$A,B: I \to \mathcal{B}(\mathcal{H})$, 
be norm continuous with $A$ self-adjoint, i.e. $A(t)^*=A(t)$ for all $t\in I$.
Then, for any $t_0 \in I$, the solution of the initial value problem 
\begin{equation} \label{gende}
\frac{d}{dt} f(t) = i[A(t), f(t)] + B(t) \quad \mbox{with} \quad f(t_0) = f_0 \in \mathcal{B}(\mathcal{H})
\end{equation}
satisfies the estimate
\begin{equation} \label{normbd0}
\| f(t) \| \leq \| f(t_0) \| + \int_{t_-}^{t_+} \| B (s) \| \, ds \quad \mbox{for any } t \in I.
\end{equation}
Here we have set $t_+ = \max \{ t, t_0 \}$ and $t_- = \min \{ t,t_0 \}$.
\end{lemma}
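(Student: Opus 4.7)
The plan is to reduce to the $A \equiv 0$ case by a time-dependent unitary conjugation (the ``interaction picture''), thereby eliminating the commutator term and leaving an inhomogeneous ODE whose right-hand side is controlled by $\|B(t)\|$.

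More explicitly, I would first invoke the same existence result that is cited in the excerpt just before \eq{uprop} (Theorem X.69 in \cite{reed:1975}) to produce a strongly continuous family of unitary propagators $U(t) \in \cB(\cH)$ satisfying
\begin{equation}
\frac{d}{dt} U(t) = i A(t) U(t), \qquad U(t_0) = \idty,
\end{equation}
which exist because $A$ is norm-continuous and self-adjoint on $I$. The unitarity is the whole point: conjugation by $U(t)$ preserves the operator norm.

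Next I would introduce the ``rotated'' observable
\begin{equation}
g(t) = U(t)^* f(t) U(t),
\end{equation}
and compute
\begin{equation}
\frac{d}{dt} g(t) = -iU(t)^* A(t) f(t) U(t) + U(t)^*\bigl(i[A(t),f(t)] + B(t)\bigr)U(t) + iU(t)^* f(t) A(t) U(t) = U(t)^* B(t) U(t),
\end{equation}
where all the $A$-terms cancel by self-adjointness of $A(t)$. Integrating from $t_0$ to $t$ (assuming $t \geq t_0$) gives
\begin{equation}
g(t) = f(t_0) + \int_{t_0}^{t} U(s)^* B(s) U(s)\, ds,
\end{equation}
and since $\|f(t)\| = \|U(t) g(t) U(t)^*\| = \|g(t)\|$, the triangle inequality together with unitarity of $U(s)$ yields
\begin{equation}
\|f(t)\| \leq \|f(t_0)\| + \int_{t_0}^{t} \|B(s)\| \, ds.
\end{equation}
The case $t \leq t_0$ is symmetric: one integrates backwards from $t_0$ to $t$, obtaining the same bound with $\int_t^{t_0} \|B(s)\|\, ds$ on the right. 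Combining both cases gives \eq{normbd0} with the integration interval $[t_-, t_+]$.

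I do not expect any serious obstacle here; the lemma is essentially a Duhamel / variation-of-constants identity dressed up for a commutator flow. The only point that needs a little care is citing existence and unitarity of $U(t)$ for a merely norm-continuous (not necessarily smooth) generator $A(t)$, but this is precisely the setting already invoked in the excerpt for the Heisenberg propagator $U_\Lambda(t,s)$, so the same reference applies verbatim.
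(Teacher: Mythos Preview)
Your proof is correct and is essentially the same as the paper's: both construct the unitary propagator $U(t)$ for $iA(t)$ and use the resulting Duhamel (variation-of-constants) formula $f(t) = U(t)\bigl(f_0 + \int_{t_0}^t U(s)^* B(s) U(s)\,ds\bigr)U(t)^*$, from which the norm bound is immediate by unitarity. The only cosmetic difference is that you derive this formula by passing to the interaction picture $g(t)=U(t)^* f(t) U(t)$ and integrating, whereas the paper simply writes down the formula and remarks that one readily checks it solves the equation.
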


The assumption of norm continuity above is convenient because it guarantees that the mapping
$s \mapsto \| B(s) \|$ is continuous and thus measurable. A variant of this result
in the case that $A$ and $B$ are merely strongly continuous is proved in
\cite{nachtergaele:2014}.

\begin{proof} 
Since $A$ is self-adjoint, the unitary propagator corresponding to
\be
\frac{d}{dt}U(t,t_0) = iA(t)U(t,t_0) \quad \mbox{with } U(t_0, t_0) =\idty 
\ee
is well-defined for any $t \in I$. In this case, the function $g: I \to \cB(\cH)$ given by
\be
g(t) = U(t,t_0) g_0 U(t,t_0)^* 
\ee
is the unique solution of the initial value problem
\begin{equation} \label{npde}
\frac{d}{dt} g(t) = i [ A(t), g(t) ] \quad \mbox{with} \quad g(t_0) = g_0 \in \cB(\cH)\, .
\end{equation}
{F}rom this, one readily checks that
\begin{equation} \label{deff}
f(t)  = U(t, t_0) \left( f_0 + \int_{t_0}^{t} U(s, t_0)^* B(s) U(s, t_0) \, ds \right) U(t, t_0)^*
\end{equation} 
is the unique solution of (\ref{gende}) from which (\ref{normbd0}) follows.
\end{proof}

We now use Lemma~\ref{lem:normbd} to provide estimates on two 
families of operators. Let $\Phi$ be an even, time-dependent interaction.
For any $\Lambda \in \cP_0( \Gamma)$, consider the 
dynamics $\tau^\Lambda_{t,s}$ on $\cA_\Lambda$, as in (\ref{fv_dyn}) above. 
Given $X \subset \Lambda$ and any $B\in \cA_\Lambda$, we define maps $g_{t,s}^{X,B}, h_{t,s}^{X,B} : \cA_X\to\cA_\Lambda$
as follows:
\be
g_{t,s}^{X,B}(A) = [\tau^\Lambda_{t,s}(A), B] \quad \mbox{and} \quad h_{t,s}^{X,B}(A) = \{\tau^\Lambda_{t,s}(A), B\} \quad \mbox{for all } A \in \cA_X \, .
\label{mappings_gh}\ee

\begin{lemma}\label{lem:normpresapp}  Let $\Phi$ be an even, time-dependent interaction and
take $\Lambda \in \mathcal{P}_0( \Gamma)$.
For $X \subset \Lambda$, $A\in\cA_X$, and $s,t \in I$, with $s\leq t$, the
mappings $g_{t,s}^{X,B}$ and $h_{t,s}^{X,B} : \cA_X\to\cA_\Lambda$, defined in \eq{mappings_gh}, satisfy the bounds
\bea \label{gennormbd}
\| g^{X,B}_{t,s}(A) \| &\leq& \| [\tau_{t,s}^X(A), B] \|  + 2 \Vert A\Vert \sum_{Z \in S_{\Lambda}(X)} \int_s^t \| g^{Z,B}_{r,s}( \Phi(Z,r)) \| \, dr \label{g_normbd}\\
\| h^{X,B}_{t,s}(A) \| &\leq& \| \{ \tau_{t,s}^X(A), B\} \|  + 2 \Vert A\Vert
\sum_{Z \in S_{\Lambda}(X)} \int_s^t \| g^{Z,B}_{r,s}( \Phi(Z,r)) \| \, dr, \label{h_normbd}
\eea
where $\tau_{t,s}^X(A)$ is the finite-volume dynamics associated to $H_X(t)$, see (\ref{fv_ham}) and (\ref{fv_dyn}).
\end{lemma}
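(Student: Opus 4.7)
The plan is to interpolate between the local and global dynamics by introducing, for $r\in[s,t]$,
\[
F(r) \;=\; \tau^\Lambda_{r,s}\!\bigl(\tau^X_{t,r}(A)\bigr), \qquad \phi(r) \;=\; [F(r),B].
\]
The endpoints $\phi(s) = [\tau^X_{t,s}(A),B]$ and $\phi(t) = g^{X,B}_{t,s}(A)$ are exactly the quantities appearing in \eqref{g_normbd}, so the bound will follow by showing that $\phi$ satisfies an inhomogeneous commutator ODE to which Lemma~\ref{lem:normbd} can be applied.

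The key computation is that of $F'(r)$. From the cocycle property of $U_\Lambda,U_X$ and \eqref{uprop}, one extracts the Heisenberg-type derivatives
\[
\frac{d}{dr}\tau^\Lambda_{r,s}(C) \;=\; i\,\tau^\Lambda_{r,s}([H_\Lambda(r),C]), \qquad \frac{d}{dr}\tau^X_{t,r}(A) \;=\; -i\,[H_X(r),\tau^X_{t,r}(A)],
\]
valid for any fixed $C\in\cA_\Lambda$. The sign difference is essential: it reflects that $r$ is the upper argument of $\tau^\Lambda_{r,s}$ but the lower argument of $\tau^X_{t,r}$, and it is what makes the choice of $F$ work. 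Combining the two via the product rule and pulling $\tau^\Lambda_{r,s}$ to the outside collapses the cross-terms and yields
\[
F'(r) \;=\; i\,\tau^\Lambda_{r,s}\!\bigl([\,H_\Lambda(r)-H_X(r),\;\tau^X_{t,r}(A)\,]\bigr).
\]
Because $\tau^X_{t,r}(A)\in\cA_X$ and $\Phi$ is even, Proposition~\ref{prop:dis_sup}(i) forces $[\Phi(Z,r),\tau^X_{t,r}(A)]=0$ whenever $Z\cap X=\emptyset$, so inside the commutator one may replace $H_\Lambda(r)-H_X(r)$ by the boundary Hamiltonian $H_\partial(r):=\sum_{Z\in S_\Lambda(X)}\Phi(Z,r)$. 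Taking the commutator of $F'(r)$ with $B$, distributing $\tau^\Lambda_{r,s}$ through the inner commutator, and applying the Jacobi identity $[[X,Y],B]=[X,[Y,B]]-[Y,[X,B]]$ to each $Z$-term gives
\[
\phi'(r) \;=\; i\,\bigl[\,\tau^\Lambda_{r,s}(H_\partial(r)),\,\phi(r)\,\bigr] \;+\; R(r),
\]
where $R(r) = -i\sum_{Z\in S_\Lambda(X)}\bigl[\,\tau^\Lambda_{r,s}(\tau^X_{t,r}(A)),\;g^{Z,B}_{r,s}(\Phi(Z,r))\,\bigr]$.

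This is precisely the ODE handled by Lemma~\ref{lem:normbd}, with the self-adjoint generator $\tau^\Lambda_{r,s}(H_\partial(r))$ and inhomogeneity $R$, so $\|\phi(t)\|\leq\|\phi(s)\|+\int_s^t\|R(u)\|\,du$. Automorphism invariance gives $\|\tau^\Lambda_{r,s}(\tau^X_{t,r}(A))\|=\|A\|$, so each summand of $R$ has norm at most $2\|A\|\cdot\|g^{Z,B}_{r,s}(\Phi(Z,r))\|$, and \eqref{g_normbd} follows. The bound \eqref{h_normbd} is obtained by repeating the argument with $\phi(r)=\{F(r),B\}$, replacing Jacobi by the identity $\{[X,Y],B\}=[X,\{Y,B\}]-\{Y,[X,B]\}$; this crucially keeps the drift term as a commutator $i[\tau^\Lambda_{r,s}(H_\partial(r)),\phi(r)]$ (so Lemma~\ref{lem:normbd} still applies), while the remainder becomes an anti-commutator subject to the same $2\|A\|$-bound. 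The chief obstacle is locating the correct interpolation $F$: seemingly natural alternatives such as $\tau^\Lambda_{t,r}(\tau^X_{r,s}(A))$ produce remainders involving conjugated quantities like $\tau^\Lambda_{t,r}(\tau^X_{r,s}(H_X(r)))$ that do not simplify for time-dependent Hamiltonians, whereas the pairing above is exactly the one that lets the contributions of $H_\Lambda(r)$ and $H_X(r)$ line up algebraically into the clean surface commutator.
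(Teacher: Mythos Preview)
Your proof is correct and is essentially the paper's argument in a slightly different guise. The paper differentiates $f_1(t)=[\tau^\Lambda_{t,s}\circ\hat\tau^X_{t,s}(A),B]$ in $t$ (with $\hat\tau^X_{t,s}=(\tau^X_{t,s})^{-1}$), applies Lemma~\ref{lem:normbd}, and then substitutes $A\mapsto\tau^X_{t,s}(A)$ at the end; your interpolation $F(r)=\tau^\Lambda_{r,s}(\tau^X_{t,r}(A))$ simply builds that substitution in from the outset, since $\hat\tau^X_{r,s}(\tau^X_{t,s}(A))=\tau^X_{t,r}(A)$ by the cocycle property. The resulting ODE, the use of Jacobi (resp.\ the mixed identity~\eqref{antiJacobi}) to isolate a self-adjoint drift, and the invocation of Lemma~\ref{lem:normbd} are identical in both versions.
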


As will be the case in our applications, these bounds are particularly useful when the first term on the right-hand-sides above vanish. 

\begin{proof}[Proof of Lemma~\ref{lem:normpresapp}]
Let $A \in \mathcal{A}_X, B\in \cA_\Lambda$ and $s\in I$ be fixed. To derive the two bounds in parallel, define two functions:
\be
f_1(t) = [ \tau_{t,s}^{\Lambda} \circ \hat{\tau}_{t,s}^X  (A) , B ] \quad \mbox{and} \quad 
f_2(t) = \{ \tau_{t,s}^{\Lambda} \circ \hat{\tau}_{t,s}^X  (A) , B \}.
\ee
Here, $\hat{\tau}_{t,s}^X$  denotes the inverse dynamics of the system
restricted to $X$,  i.e.
\begin{equation}
\hat{\tau}^X_{t,s}(A) = U_X(t,s) A U_X(t,s)^*,
\end{equation}
which is the inverse of (\ref{fv_dyn}). One readily checks that for $f_1$:
\begin{eqnarray}
\frac{d}{dt} f_1(t) & = & i \left[ \tau_{t,s}^{\Lambda} \left( \left[ H_{\Lambda}(t) - H_X(t), \hat{\tau}_{t,s}^X(A) \right] \right), B \right]  \nonumber \\
& = & i \sum_{Z \in S_{\Lambda}(X)} \left[ \left[ \tau_{t,s}^{\Lambda}( \Phi(Z,t)), \tau_{t,s}^{\Lambda} \circ \hat{\tau}_{t,s}^X(A) \right], B \right] \nonumber \\
& = & i \!\!\!\!\! \sum_{Z \in S_{\Lambda}(X)}\!\!\! \left[ \tau_{t,s}^{\Lambda}( \Phi(Z,t)), f_1(t) \right] - i\!\!\!\!\! \sum_{Z \in S_{\Lambda}(X)} \!\!\! 
\left[ \tau_{t,s}^{\Lambda} \circ \hat{\tau}_{t,s}^X(A), \left[ \tau_{t,s}^{\Lambda}( \Phi(Z,t)), B \right] \right].  \nonumber
\end{eqnarray}
In the first equality above, we use that the adjoints of the unitary propagators satisfy the adjoint of (\ref{uprop}). 
For the second, we use that ${\rm supp}(\hat{\tau}_{t,s}^X (A)) \subset X$, and for the
final equality we use the Jacobi identity. 

An almost identical calculation for $f_2(t)$ gives:
\begin{eqnarray}
\frac{d}{dt} f_2(t) & = & i \left\{ \tau_{t,s}^{\Lambda} \left( \left[ H_{\Lambda}(t) - H_X(t), \hat{\tau}_{t,s}^X(A) \right] \right), B \right\}  \nonumber \\
& = & i \sum_{Z \in S_{\Lambda}(X)} \left\{ \left[ \tau_{t,s}^{\Lambda}( \Phi(Z,t)), \tau_{t,s}^{\Lambda} \circ \hat{\tau}_{t,s}^X(A) \right], B\right\} \nonumber \\
& = & i\!\!\!\!\!\! \sum_{Z \in S_{\Lambda}(X)} \!\!\!\left[ \tau_{t,s}^{\Lambda}( \Phi(Z,t)), f_2(t) \right] - i\!\!\!\!\! \sum_{Z \in S_{\Lambda}(X)} 
\!\!\!\left\{ \tau_{t,s}^{\Lambda} \circ \hat{\tau}_{t,s}^X(A), \left[ \tau_{t,s}^{\Lambda}( \Phi(Z,t)), B \right] \right\}.  \nonumber
\end{eqnarray}
We note that the only change here is that, instead of the Jacobi identity for
commutators, we use the following identity which holds for any three elements
$x,y,z$ in an associative algebra:
\be
\{ [x,y],z\} - \{[z,x],y\} + [\{y,z\},x] = 0.
\label{antiJacobi}\ee

Both differential equations above are of the form required to apply Lemma~\ref{lem:normbd}.
The claimed bounds follow from this and the substitution $A \mapsto \tau_{t,s}^X(A)$.
\end{proof}

We can now present the proof of the Lieb-Robinson bounds for lattice fermions
for which the following notation will be useful: for $X,Y\subset\Gamma$, set
\be
\delta_Y(X)= \begin{cases}
 0 & \mbox{if } X\cap Y = \emptyset\\ 1 & \mbox{if }  X\cap Y \ne \emptyset.\end{cases}
\ee

\begin{proof}[Proof of Theorem~\ref{thm:lrb_f}]
We first prove \eq{lrb_c_est}. Lemma~\ref{lem:normpresapp} implies
\begin{equation} \label{base_est_com}
\| [\tau_{t,s}^{\Lambda}(A), B ] \|  \leq  \| [ \tau_{t,s}^X(A), B] \| + 2 \| A \| \sum_{Z \in S_{\Lambda}(X)} \int_s^t  
\| [ \tau_{r,s}^{\Lambda}( \Phi(Z,r)), B ] \| \, dr.
\end{equation}
Since the observables $A$ and $B$ have disjoint support and one of them is even, 
it is clear that $[ \tau_{t,s}^X(A), B] =0$. More generally, if $A \in \cA_Z^+$, then $\tau_{r,s}^Z(A) \in \cA_Z^+$
for all $r,s \in I$ and so the bound $ \| [ \tau_{r,s}^Z(A), B] \| \leq 2 \| A \| \| B \| \delta_Y(Z)$ follows. In this case,
by iterating \eq{base_est_com} $N \geq 1$ times, we obtain
\begin{equation}
\Vert [\tau_{t,s}^{\Lambda}(A),B]\Vert \leq 2 \Vert A\Vert \Vert B\Vert  \left( \delta_Y(X) +  \sum_{n=1}^N a_n(t) \right) + R_{N+1}(t) 
\end{equation}
where
\begin{eqnarray}\label{ant}
a_n(t) = 2^n \sum_{Z_1 \in S_{\Lambda}(X)} \sum_{Z_2 \in S_{\Lambda}(Z_1)} \cdots \sum_{Z_n \in S_{\Lambda}(Z_{n-1})}  \delta_Y(Z_n) \int_s^t \int_s^{r_1} \cdots \int_s^{r_{n-1}} \times \nonumber \\
\times \left( \prod_{j=1}^{n} \| \Phi (Z_j, r_j) \| \right)  dr_n dr_{n-1} \cdots d r_1 
\end{eqnarray}
and 
\begin{eqnarray}\label{def:rem}
R_{N+1}(t) = 2^{N+1} \sum_{Z_1 \in S_{\Lambda}(X)} \sum_{Z_2 \in S_{\Lambda}(Z_1)}  \cdots  \sum_{Z_{N+1} \in S_{\Lambda}(Z_N)} \int_s^t \int_s^{r_1} \cdots  \int_s^{r_{N}} \times \nonumber \\  \left( \prod_{j=1}^{N} \| \Phi (Z_j, r_j) \|  \right) \Vert [\tau_{r_{N+1},s}^{\Lambda}(\Phi(Z_{N+1},r_{N+1})),B]\Vert  dr_{N+1} d r_{N} \cdots dr_1.
\end{eqnarray}

The remainder term $R_{N+1}(t)$ is estimated as follows. 
First, we observe that
\begin{equation}
\Vert [\tau_{r_{N+1},s}^{\Lambda}(\Phi(Z_{N+1}, r_{N+1}) ),B]\Vert\leq 2 \| B \|\, \| \Phi(Z_{N+1},r_{N+1})\|  \, .
\end{equation}
Next, we note that the sums above are actually sums over chains of sets $Z_1,\cdots Z_{N+1}$
that satisfy $Z_1 \cap \partial_{\Phi}X \neq \emptyset$ and $Z_j \cap Z_{j-1} \neq \emptyset$ for $2 \leq j \leq N+1$.
As such, there are points $w_1, w_2, \cdots, w_{N+1} \in \Lambda$ with $w_1 \in Z_1 \cap \partial_{\Phi}X$ and
$w_j \in Z_j \cap Z_{j-1}$ for all $2 \leq j \leq N+1$. A simple upper bound on
these sums is then obtained by over counting:
\begin{equation} \label{sumbd}
\sum_{Z_1 \in S_{\Lambda}(X)} \sum_{Z_2 \in S_{\Lambda}(Z_1)}  \!\!  \cdots   \!\! \sum_{Z_{N+1} \in S_{\Lambda}(Z_N)}  \!\!\!\!  *\quad \leq 
\sum_{w_1 \in \partial_{\Phi}X}\sum_{w_2,\ldots, w_{N+2} \in \Lambda} \sum_{\stackrel{Z_1, \ldots, Z_{N+1} \subset \Lambda:}{w_k, w_{k+1} \in Z_k, k=1,\ldots, N+1}} \!\!\!\! *
\end{equation}
where $*$ denotes an arbitrary non-negative quantity. Note that we have also
used that the last set $Z_{N+1}$ must contain more than one point since $Z_{N+1} \in S_{\Lambda}(Z_N)$.
Now, from \eq{normPhi} we have that
\begin{equation} \label{intbd1}
\sum_{\stackrel{Z_k \subset \Lambda:}{w_k, w_{k+1} \in Z_k}} \| \Phi(Z_k, r_k)  \|\leq \| \Phi \|_G(r_k) G(w_k, w_{k+1})
\end{equation}
holds for each $1 \leq k \leq N+1$. We conclude that
\begin{eqnarray*}
R_{N+1}(t) & \leq & 2 \| B \| 2^{N+1} \int_s^t \!\cdots\! \int_s^{r_N} \!\!\! \sum_{w_1 \in \partial_{\Phi}X}\sum_{w_2,\ldots, w_{N+2} \in \Lambda} 
\sum_{\stackrel{Z_1, \ldots, Z_{N+1} \subset \Lambda:}{w_k, w_{k+1} \in Z_k, k=1,\ldots, N+1}}\\
&&\quad\times \prod_{j=1}^{N+1} \| \Phi(Z_j,r_j)  \| dr_{N+1} \cdots dr_1 \nonumber \\
& \leq &  2 \| B \| 2^{N+1} \int_s^t \cdots \int_s^{r_N} \sum_{w_1 \in \partial_{\Phi}X} 
\sum_{w_2,\ldots, w_{N+2} \in \Lambda}\\
&&\quad\times \prod_{j=1}^{N+1} \| \Phi \|_G(r_j) G(w_j,w_{j+1}) dr_{N+1} \cdots dr_1 \nonumber \\ 
& \leq & 2 \| B \| 2^{N+1}  \sum_{w_1 \in \partial_{\Phi}X}  \sum_{w_{N+2} \in \Lambda} G(w_1, w_{N+2})\\
&&\quad\times \int_s^t \cdots \int_s^{r_N} \prod_{j=1}^{N+1} \| \Phi \|_G(r_j)
dr_{N+1} \cdots dr_1 \nonumber \\ 
& \leq & 2 \| B \| | \partial_{\Phi} X| \| G \|  \frac{ \left( 2 \int_s^t \|
\Phi \|_G(r) \, dr \right)^{N+1}}{(N+1)!}.
\end{eqnarray*}
Since $\| \Phi \|_G(\cdot)$ is locally integrable on $I$, this remainder clearly goes to 0 as $N \to \infty$.

A similar estimate can be applied to the terms $a_n(t)$. Note that these terms
are also sums over chains of sets. However,
there is a restriction: only those chains whose final link $Z_n$ satisfies $Z_n \cap Y \neq \emptyset$ contribute to the sum. The bound
\begin{equation}
a_n(t) \leq  \frac{\left( 2  \int_s^t \| \Phi \|_G(r) \, dr \right)^n}{n!}
\sum_{x \in \partial_{\Phi}X} \sum_{y \in Y} G(x,y).
\end{equation}
follows as above. Since $\delta_Y(X) = 0$ and $n \geq 1$, the bound in (\ref{lrb_c_est}) is now clear.

The proof of \eq{lrb_ac_est} proceeds similarly.  In fact, Lemma~\ref{lem:normpresapp} implies
\begin{equation} \label{base_est_acom}
\| \{ \tau_{t,s}^{\Lambda}(A), B \} \|  \leq  \| \{ \tau_{t,s}^X(A), B \} \| + 2 \| A \| \sum_{Z \in S_{\Lambda}(X)} \int_s^t  
\| [ \tau_{r,s}^{\Lambda}( \Phi(Z,r)), B ] \| \, dr
\end{equation}
Since $\{A, B\} =0$, the first term on the left-hand-side above vanishes. Using \eq{base_est_com} to estimate the second term and iterating  
in exactly the same way as above, yields (\ref{lrb_ac_est}) as claimed. 
\end{proof}
 
Theorem~\ref{thm:lrb_f} gives an estimate of $\left\| \left[ \tau_{t,s}^{\Lambda}(A), B \right] \right\|$ for commuting observables $A$ and $B$ with 
disjoint supports. As a function of $|t-s|$, this estimate grows exponentially and for small $|t-s|$ vanishes linearly. A few additional comments are in 
order. First, note that when the supports of $A$ and $B$ have non-empty intersection, in general, one cannot
 expect to improve on the trivial bound:  $\left\| \left[ \tau_{t,s}^{\Lambda}(A), B \right] \right\| \leq 2 \Vert A\Vert \Vert B\Vert$. 
 On the other hand, given a lower bound on the distance between the supports of $A$ and $B$, and if the interaction is
of finite range, a slight modification to the proof of Theorem~\ref{thm:lrb_f}
shows that the behavior for small $|t-s|$ is $o(|t-s|^n)$, where $n$ is the
minimum number of interactions terms necessary to connect the supports of $A$
and $B$ with a chain of sets, see e.g. (\ref{ant}). For similar reasons, one can
also show that single-site terms in the Hamiltonian do not contribute to the estimate of Lieb-Robinson velocity. This can easily be seen using the interaction picture as is done, e.g., in \cite{nachtergaele:2009b}. Of course, this is not the statement that single-site terms in the Hamiltonian do not affect the velocity associated to certain time-evolved observables; rather it is the fact that this general upper bound is insensitive to such terms. For example, in specific models with a random external field, the speed of propagation has been shown to vanish \cite{hamza:2012}.

\section{Conditional expectation and local approximations}\label{sec:cond_exp}

In many applications to quantum spin systems, the commutator estimates provided
by Lieb-Robinson bounds are used to approximate quasi-local observables by
strictly local ones. These local approximations are given by a conditional
expectation with respect to a suitable product state (see, e.g.,
\cite{bravyi:2006a,nachtergaele:2006,nachtergaele:2013}). In the setting of
lattice fermions, the conditional expectations that come to mind are those with 
respect to a product state such as, e.g., the tracial state, which is also the
quasi-free state on $\cA_\Gamma$ determined by $\omega^{\rm tr}(a_x a_y)=0$,
and  $\omega^{\rm tr}(a^*_x a_y)=\frac{1}{2}\delta_{x,y}$, for all $x,y\in
\Gamma$. This state has the following product property \cite[Theorem
6.12]{alicki:2001}: for any finite set of distinct $x_1,\ldots,x_k \in\Gamma$,
and $A_{x_i}\in \cA_{\{x_i\}}$, we have
\be\label{productproperty}
\omega^{\rm tr}(A_{x_1}\cdots A_{x_k}) = \prod_{i=1}^k \omega^{\rm tr}(A_{x_i}).
\ee
It is not difficult to see that for all $\Lambda\in\cP_0(\Gamma)$, $\omega^{\rm tr}$ restricted to
$\cA_\Lambda$ is the state of maximal entropy. It was shown in \cite{araki:2003}[Theorem 4.7] that 
for each finite $X\subset\Gamma$, there is a unique conditional expectation $\bF_X:\cA_\Gamma\to\cA_X$
that leaves $\omega^{\rm tr}$ invariant, meaning
\be
\omega^{\rm tr}(\bF_X(AB)) =\omega^{\rm tr}(\bF_X(A)B), \quad A\in\cA_\Gamma, B\in \cA_X.
\label{CEaraki}\ee
The same result for arbitrary even product states $\omega$ was proved in \cite{araki:2004}.
The family of conditional expectations 
$\{\bF_X: \cA_\Gamma\to\cA_X \mid X\in \cP_0(\Gamma)\}$ determined by \eq{CEaraki}, 
satisfies the commuting diagram:
\be
\begin{tikzcd}
  \cA_{X\cup Y}  \arrow[r, "\bF_X"] \arrow[d, "\bF_Y"]
    & \cA_X \arrow[d, "\bF_{X\cap Y}"] \\
  \cA_Y \arrow[r, "\bF_{X\cap Y}"]
&\cA_{X\cap Y} \end{tikzcd}.
\label{commutingsquare}\ee
The commutativity of this diagram is the essential property that allows one
to use the maps $\bF_X$ in the same role as the partial trace in the case of 
of quantum spin systems. We note however, that for applications one often only needs 
these relations for even observables, in which case there are other options for the local algebras
and the conditional expectations. For our purposes, it will be useful to introduce a slightly different 
set of conditional expectations, which we will now explain.

Under the isomorphism $\cA_\Lambda\cong M_{2^{|\Lambda|}}$, $\omega^{\rm tr}$ coincides with 
the normalized trace on $M_{2^{|\Lambda|}}$. For even observables, the conditional expectation
acts exactly as in the situation of spin system. The anticommutation properties of odd observables, 
however, introduce a small twist, which implies that the Krauss form of the map $\bF_X$ contains 
global operators (see \eq{nonlocalkrauss} below). This creates a complication for the proof of Lemma \ref{lem:local_approx_f}.
Therefore, we introduce another family of conditional expectations which, as we will show, also
satisfy a commuting diagram similar to \eq{commutingsquare}, and which coincide with the maps $
\bF_X$ on even observables.

The remainder of this section is devoted to describing the conditional expectations of interest for finite volume systems.
Given a $C^*$-algebra $\cA$, and a subalgebra $\cB$ of $\cA$,
by a conditional expectation of $\cA$ onto $\cB$, we mean a unity-preserving, completely positive map
$\bE: \cA \to \cA$ with $\ran(\bE)=\cB$, such that $\bE(BAC) = B\bE(A)C$ for all $A\in \cA$ and $B, \,
C \in \cB$. Let $X \subseteq \Lambda \subset \Gamma$ with $X$ and
$\Lambda$ finite.  We will show that the range of the relevant conditional
expectation $\bE_X^\Lambda : \cA_\Lambda \to \cA_\Lambda$ is given by the $C^*$
-subalgebra:
\[\cA_X^\Lambda = \{ A + B \theta_\Lambda \, : \, A\in \cA_X^+, \, B\in
\cA_X^-\},\]
where $\theta_\Lambda=(-1)^{N_\Lambda}$ is the parity operator
used to define $\Theta_\Lambda$ in \eqref{Parity_Op}.
Since we only consider Hamiltonians defined by even interactions, in applications
it will be sufficient to only consider the restriction of $\bE^\Lambda_X$ to
$\cA_\Lambda^+$. In the restricted case $\bE_X^\Lambda:
\cA_\Lambda^+ \to \cA_\Lambda$ will be a conditional expectation with
range equal to $\cA^+_X$.

We define $\bE_X^\Lambda$ by giving its Krauss form. For each site $x\in
\Lambda$, define
\be
u^{(0)}_x = \idty,
\quad 
u^{(1)}_x = a^*_x + a_x,
\quad 
u^{(2)}_x = a^*_x - a_x,
\quad 
u^{(3)}_x =  \idty - 2 a^*_x a_x.
\ee
It follows from the CAR that these are unitary. Clearly, $u^{(i)}_x \in
\cA_{\{x\}}^+$ for $i = 0,\, 3$, and  $u^{(i)}_x \in \cA_{\{x\}}^-$ for $i = 1,
\,2$. Therefore, $u^{(i)}_x$ commutes with the elements of
$\cA_{\Lambda\setminus \{x\}}$ for $i = 0, \, 3$, and $u^{(i)}_x$ commutes with
$\cA_{\Lambda\setminus \{x\}}^+$ and anti-commutes with $\cA_{\Lambda\setminus
\{x\}}^-$ for $i = 1, \, 2$.

For a subset $X\subseteq \Lambda$, let $I_{\Lambda\setminus X} =
\{0,1,2,3\}^{\Lambda \setminus X}$ and fix an ordering of the sites of
$\Lambda \setminus X = \{x_1,\ldots,x_{n} \}$. Then, for each $\alpha \in
I_{\Lambda \setminus X}$ define the unitary operator $u(\alpha)\in \cA_{\Lambda
\setminus X}$ by
\be
u(\alpha) = u_{x_1}^{(\alpha(x_1))}\cdots u_{x_n}^{(\alpha(x_{n}))}.
\ee
We will show that the following unity-preserving, completely positive map
$\bE^\Lambda_X:
\cA_\Lambda \to \cA_\Lambda$ is the conditional expectation of
interest associated with $X$ on $\Lambda$: 
\be\label{def_ELambda}
\bE_X^\Lambda(A) = \frac{1}{4^{|\Lambda\setminus X|}} \sum_{\alpha \in I_{\Lambda\setminus X}} u(\alpha)^* A u(\alpha),\quad A\in \cA_\Lambda.
\ee
It is important to notice that the map $\bE^\Lambda_X$ does not depend on the
ordering chosen for the site in $\Lambda \setminus X$. Since the unitaries
$u_x^{(k)}, u_y^{(l)}$, $x\neq y$, $k,l\in\{0,1,2,3\}$, either commute or
anti-commute, any reordering $\tilde u(\alpha)$ of $u(\alpha)$ equals either
$u(\alpha)$ or $- u(\alpha)$. Either way, the $\alpha$-term in \eq{def_ELambda}
is not affected.

The first important property for $\bE_X^\Lambda$ is that \be
\label{TraceProperty} \bE_X^\Lambda(A) = \omega^\tr(A)\idty \;\text{  for all }
\;A\in \cA_{\Lambda\setminus X}.
\ee
It is easy to verify that $\bE_{\Lambda \setminus \{x\}}^\Lambda(A) =
\omega^{\tr}(A)\idty$ for a monomial $A\in \cA_{\{x\}}$. Using this, the product
property of $\omega^\tr$ given in \eqref{productproperty}, and the CAR it
follows that \eqref{TraceProperty} holds for any monomial $A\in \cA_{\Lambda
\setminus X}$. The property then extends to $\cA_{\Lambda \setminus X}$ by
linearity.

To establish that $\bE_X^\Lambda$ is a conditional expectation, it is left to
verify that
\be\label{CondExp}
\bE_X^\Lambda(BAC) = B\bE_X^\Lambda(A)C
\ee
for all $A\in \cA_\Lambda$ and $B, \, C \in \ran(\bE_X^\Lambda)$. In the
situation that $\ran(\bE_X^\Lambda)$ is a $C^*$-subalgebra of $\cA_\Lambda$,
a theorem by Tomiyama  \cite{tomiyama:1957} shows that \eqref{CondExp} is satisfied
if $\bE_X^\Lambda$ is a norm-1 projection. In the next lemma we establish that
Tomiyama's result applies to both $\bE_X^\Lambda:\cA_\Lambda \to \cA_\Lambda$
and the restriction $\bE_X^\Lambda:\cA_\Lambda^+ \to \cA_\Lambda$.

\begin{lemma}
For $X, \, \Lambda \in \cP_0(\Gamma)$ with $X \subset \Lambda$, the map $\bE^\Lambda_X: \cA_\Lambda \to \cA_\Lambda$ defined in \eq{def_ELambda} satisfies

(i) $\Vert \bE^\Lambda_X\Vert =1 $;

(ii) $ (\bE^\Lambda_X)^2=  \bE^\Lambda_X$;

(iii) $\bE^\Lambda_X(\cA_\Lambda) = \cA_X^\Lambda$;

(iv) $\bE^\Lambda_X(\cA_\Lambda^+) = \cA_\Lambda^+$.
\end{lemma}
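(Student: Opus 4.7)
\emph{Plan.} I would prove (i) directly, then establish two key facts---that each summand $u(\alpha)^*\cdot u(\alpha)$ fixes every element of $\cA_X^\Lambda$, and that the image of $\bE^\Lambda_X$ is contained in $\cA_X^\Lambda$---from which (ii), (iii), and (iv) follow.

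For (i), each map $A \mapsto u(\alpha)^*Au(\alpha)$ is an isometry of $\cA_\Lambda$ since $u(\alpha)$ is unitary, so the triangle inequality gives $\|\bE^\Lambda_X\| \le 1$, while $\bE^\Lambda_X(\idty) = \idty$ supplies the matching lower bound.

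Next (\emph{fixed-point step}), I claim $u(\alpha)^*R\,u(\alpha) = R$ for every $R = A + B\theta_\Lambda \in \cA_X^\Lambda$ and every $\alpha$. For $A\in\cA_X^+$, Proposition~\ref{prop:dis_sup}(i) gives $[A,u(\alpha)] = 0$. For $B\theta_\Lambda$, write $\theta_\Lambda = \theta_X \theta_{\Lambda\setminus X}$ and let $p_\alpha\in\{0,1\}$ denote the parity of $u(\alpha)$. Proposition~\ref{prop:dis_sup}(ii) combined with the oddness of $B$ yields $u(\alpha)^*Bu(\alpha) = (-1)^{p_\alpha}B$; the parity relation $\theta_{\Lambda\setminus X}u(\alpha) = (-1)^{p_\alpha}u(\alpha)\theta_{\Lambda\setminus X}$ yields $u(\alpha)^*\theta_{\Lambda\setminus X}u(\alpha) = (-1)^{p_\alpha}\theta_{\Lambda\setminus X}$; and $\theta_X\in\cA_X^+$ commutes with $u(\alpha)$. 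The two signs cancel, so $u(\alpha)^*B\theta_\Lambda u(\alpha) = B\theta_\Lambda$. Averaging gives $\bE^\Lambda_X(R) = R$, hence $\cA_X^\Lambda \subseteq \bE^\Lambda_X(\cA_\Lambda)$.

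Now (\emph{containment step}), I show $\bE^\Lambda_X(\cA_\Lambda)\subseteq \cA_X^\Lambda$ by computing on a monomial $M = M_X M_{\Lambda\setminus X}$ (any sign from reordering is harmless). Letting $p_X$ be the parity of $M_X$, Proposition~\ref{prop:dis_sup} gives $M_Xu(\alpha) = (-1)^{p_Xp_\alpha}u(\alpha)M_X$, so
\[\bE^\Lambda_X(M) = M_X\cdot\frac{1}{4^{|\Lambda\setminus X|}}\sum_\alpha (-1)^{p_Xp_\alpha}u(\alpha)^*M_{\Lambda\setminus X}u(\alpha).\]
If $p_X = 0$, the sum is $\bE^\Lambda_X(M_{\Lambda\setminus X}) = \omega^{\tr}(M_{\Lambda\setminus X})\idty$ by \eqref{TraceProperty}, placing $\bE^\Lambda_X(M)$ in $\cA_X^+\subset\cA_X^\Lambda$. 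If $p_X = 1$, I absorb the sign via the identity $(-1)^{p_\alpha}u(\alpha)^* = \theta_{\Lambda\setminus X}u(\alpha)^*\theta_{\Lambda\setminus X}$, reducing the sum to $\theta_{\Lambda\setminus X}\bE^\Lambda_X(\theta_{\Lambda\setminus X}M_{\Lambda\setminus X}) = \omega^{\tr}(\theta_{\Lambda\setminus X}M_{\Lambda\setminus X})\theta_{\Lambda\setminus X}$ by a second application of \eqref{TraceProperty}. Using $\theta_{\Lambda\setminus X} = \theta_X\theta_\Lambda$ and $M_X\theta_X\in\cA_X^-$ then puts $\bE^\Lambda_X(M)$ in $\cA_X^-\theta_\Lambda\subset\cA_X^\Lambda$.

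Combining the two steps yields (iii); idempotency (ii) is then immediate, since any $\bE^\Lambda_X(A)\in\cA_X^\Lambda$ is pointwise fixed by a further application of $\bE^\Lambda_X$. For (iv), the identity $\Theta(u(\alpha)^*Au(\alpha)) = u(\alpha)^*\Theta(A)u(\alpha)$ (the two parity signs from $u(\alpha)^*$ and $u(\alpha)$ multiply to $+1$) shows $\bE^\Lambda_X$ preserves parity, so $\bE^\Lambda_X(\cA_\Lambda^+)\subseteq \cA_X^\Lambda\cap\cA_\Lambda^+ = \cA_X^+$, with the reverse inclusion from the fixed-point step applied to $\cA_X^+\subset\cA_X^\Lambda$.

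The main obstacle is the odd-$M_X$ case of the containment step: the tracial identity \eqref{TraceProperty} does not apply directly because of the sign $(-1)^{p_\alpha}$, and one must perform the ``$\theta$-twist'' $(-1)^{p_\alpha}u(\alpha)^* = \theta_{\Lambda\setminus X}u(\alpha)^*\theta_{\Lambda\setminus X}$ to reduce back to \eqref{TraceProperty}. This twist is exactly what produces the $\theta_\Lambda$-tail in $\cA_X^\Lambda$ and marks the essential departure of the fermionic conditional expectation from its tensor-product analogue.
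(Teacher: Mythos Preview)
Your proof is correct and follows essentially the same approach as the paper: both compute $\bE^\Lambda_X$ on monomials factored as $M_X M_{\Lambda\setminus X}$, split into the even/odd cases for $M_X$, invoke the tracial identity \eqref{TraceProperty}, and handle the odd case via the same ``$\theta$-twist'' $(-1)^{p_\alpha}u(\alpha)^* = \theta_{\Lambda\setminus X}u(\alpha)^*\theta_{\Lambda\setminus X}$ to arrive at the formula $\bE^\Lambda_X(M) = M_X\,\omega^{\rm tr}(M_{\Lambda\setminus X}\theta_{\Lambda\setminus X})\,\theta_{\Lambda\setminus X}$. The only organizational differences are that you isolate the fixed-point property of $\cA_X^\Lambda$ as a separate step (the paper verifies it as a consequence of the explicit formula), and for (iv) you use the parity-preservation identity $\Theta\circ{\rm Ad}_{u(\alpha)} = {\rm Ad}_{u(\alpha)}\circ\Theta$ together with $\cA_X^\Lambda\cap\cA_\Lambda^+ = \cA_X^+$, whereas the paper argues directly that in the odd--odd case the scalar $\omega^{\rm tr}(C\theta_{\Lambda\setminus X})$ vanishes; both are equally short.
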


\begin{proof}
Property (i) follows immediately from the fact that $\bE_X^\Lambda$ is an
average of unitary conjugations. 

To prove (ii)-(iv) we consider observables of the form $A = \prod_{x\in
\Lambda}A_x$ where $A_{x}\in
\cA_{\{x\}}^{\pm},$ for all $x\in\Lambda$. We refer to such operators as
product observables. Since the monomials defined in \eqref{monomials} satisfy this
condition, computing $\bE_X^\Lambda(A)$ for product observables completely
describes the map.

By (anti-)commuting the factors $A_{x}$, $A$ can be written in the form $A=BC$
where $B$ and $C$ are product observables in $\cA_X$ and $\cA_{\Lambda\setminus
X}$, respectively. Then, using the CAR:
\be
\bE^\Lambda_X(A)= \begin{cases}
B\frac{1}{4^{|\Lambda\setminus X|}} \sum_{\alpha \in
I_{\Lambda\setminus X}}  u(\alpha)^* C  u(\alpha) & \mbox {if } B\in
\cA_\Lambda^+\\
B\frac{1}{4^{|\Lambda\setminus X|}} \sum_{\alpha \in
I_{\Lambda\setminus X}} \pi(u(\alpha)) u(\alpha)^* C  u(\alpha) & \mbox {if }
B\in \cA_\Lambda^-.
\end{cases}
\label{on_monomials}\ee
Here, $\pi(u(\alpha)) = \pm 1$ denotes the parity of $u(\alpha)$. For all
$\alpha \in I_{\Lambda \setminus X}$ 
\[\pi(u(\alpha)) u(\alpha)^* C  u(\alpha) 
	=
u(\alpha)^*C\theta_{\Lambda\setminus X}u(\alpha)\theta_{\Lambda\setminus X},\]
so applying \eqref{TraceProperty} to \eqref{on_monomials} we find:
\be\label{ProductOps}
\bE^\Lambda_X(A)= \begin{cases}
B \omega^{\tr}(C) & \mbox {if } B\in \cA_\Lambda^+\\
B \omega^{\tr}(C\theta_{\Lambda\setminus X})\theta_{\Lambda \setminus X} &
\mbox {if } B\in \cA_\Lambda^-.
\end{cases}
\ee
Applying \eqref{ProductOps} a second time shows
$(\bE_{X}^\Lambda)^2(A) = \bE_X^\Lambda(A)$ for any product observable $A$, and
so property (ii) holds. Alternatively, (ii) also follows from the observation that 
$\bE_{X}^\Lambda$ is defined as the average of the adjoint actions 
$\{{\rm Ad}_{u(\alpha)}\mid \alpha \in I_{\Lambda\setminus X}\}$, which form a group.

For property (iii), we see from \eqref{ProductOps} that
$\bE_X^\Lambda(\cA_\Lambda) \subseteq \cA_X^\Lambda$ by noting that
$\theta_{\Lambda\setminus X}=\theta_X\theta_\Lambda$. The opposite containment
follows from observing that
\[\bE_X^\Lambda(A+B\theta_\Lambda) = A+B\theta_\Lambda\]
for any pair of product observables $A\in \cA_X^+$ and $B\in \cA_X^-$.

For (iv), if $A\in \cA_\Lambda^+$ then the factors $B$ and $C$ above are either
both even or both odd. If they are both odd, then $\bE_X^\Lambda(A) = 0$
since $\omega^\tr$ is zero for odd observables. It follows that
$\bE_X^\Lambda(\cA_\Lambda^+) \subseteq \cA_X^+$. The opposite containment
holds since $\bE_X^\Lambda(A) = A$ for all $A\in \cA_X^+$.
\end{proof}

Several comments are in order. First, an important consequence of
\eqref{ProductOps} is that
\be
\cA_{\Lambda \setminus X}^- \subseteq \ker(\bE_X^\Lambda).
\ee

Second, we will want to consider the family of all
conditional expectations $\bE_X^\Lambda$ such that $X\subseteq \Lambda \in
\cP_0(\Gamma)$.
Since the definition of $\bE_X^\Lambda$ is independent of the ordering of the sites $x\in
\Lambda \setminus X$, given any two sets $Y, Z \subseteq \Lambda$ such
that $\Lambda \setminus X$ is the disjoint union of $\Lambda \setminus Y$ and
$\Lambda \setminus Z$ it immediately follows that
\be\label{disjoint_decomp}
\bE_X^\Lambda = \bE_Y^\Lambda \circ \bE_Z^\Lambda 
	= \bE_Z^\Lambda \circ \bE_Y^\Lambda.
\ee
%
\begin{lemma}\label{lem:properties_ELambda}
The family of conditional expectations $\{\bE_X^\Lambda \, : \, X \subseteq
\Lambda \in \cP_0(\Gamma)\}$ defined as in \eqref{def_ELambda} satisfy the
following properties:\\
(i) For any $X, \, Y \subseteq \Lambda$,
\be
\bE_{X}^\Lambda \circ \bE_Y^\Lambda = \bE_{X\cap Y}^\Lambda.
\ee
(ii) For any $X\subseteq \Lambda$ and $A\in \cA_Z^+$ and $B\in \cA_Y^+$ with $Z
\cap Y = \emptyset$,
\be\label{ProductProp}
\bE_X^\Lambda(AB) \,=\, \bE_{X\cup Y}^\Lambda(A) \cdot \bE_{X\cup
Y^c}^\Lambda(B)   \,=\, \bE_{X\cup Z^c}^\Lambda(A) \cdot \bE_{X\cup
Z}^\Lambda(B).
\ee
(iii) Given $X\subseteq \Lambda_1 \subseteq \Lambda_2$ and $A\in
\cA_{\Lambda_1}^+$,
\be
\bE_X^{\Lambda_1}(A) = \bE_X^{\Lambda_2}(A).
\ee

\end{lemma}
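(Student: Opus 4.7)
The plan is to establish all three identities by direct computation from the defining formula \eqref{def_ELambda}, after first recording the elementary algebraic fact that for any two site indices $i,j\in\{0,1,2,3\}$ and any $x\in\Lambda$, the product $u_x^{(i)}u_x^{(j)}$ equals $\pm u_x^{(k)}$ for some $k$, which follows quickly from the CAR. As a consequence, for multi-indices $\alpha,\beta$ on (possibly overlapping) subsets of $\Lambda$, we have $u(\alpha)u(\beta) = \pm u(\delta)$ for some multi-index $\delta$, and signs become irrelevant under conjugation.

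For (i), insert \eqref{def_ELambda} twice to write $\bE_X^\Lambda\circ\bE_Y^\Lambda(A)$ as a normalized double sum of $(u(\beta)u(\alpha))^*\, A\, u(\beta)u(\alpha)$ over $\alpha\in I_{\Lambda\setminus X}$ and $\beta\in I_{\Lambda\setminus Y}$. Decompose $\Lambda$ into the four disjoint pieces $X\cap Y$, $X\setminus Y$, $Y\setminus X$, and $\Lambda\setminus(X\cup Y)$. Then $u(\beta)u(\alpha)$ has no component on $X\cap Y$, coincides with $u(\beta)$ on $X\setminus Y$ and with $u(\alpha)$ on $Y\setminus X$, while on $\Lambda\setminus(X\cup Y)$ its site-wise components combine into $\pm u_x^{(\gamma(x))}$; hence $u(\beta)u(\alpha) = \pm u(\delta)$ for some $\delta\in I_{\Lambda\setminus(X\cap Y)}$. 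Because for any fixed target $k$ and fixed $\alpha(x)$ there is exactly one $\beta(x)$ yielding $\pm u_x^{(k)}$, each $\delta$ is produced by exactly $4^{|\Lambda\setminus(X\cup Y)|}$ pairs $(\alpha,\beta)$. The identity $|\Lambda\setminus X|+|\Lambda\setminus Y|-|\Lambda\setminus(X\cup Y)| = |\Lambda\setminus(X\cap Y)|$ then reconciles the normalization with that of $\bE_{X\cap Y}^\Lambda$.

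For (ii), split $\alpha\in I_{\Lambda\setminus X}$ as $\alpha = (\alpha_1,\alpha_2)$ with $\alpha_1\in I_{\Lambda\setminus(X\cup Y)}$ and $\alpha_2\in I_{Y\setminus X}$, so $u(\alpha) = \pm u(\alpha_1)u(\alpha_2)$. Since $A\in\cA_Z^+$ is even with $Z\cap Y = \emptyset$, Proposition~\ref{prop:dis_sup}(i) implies $A$ commutes with $u(\alpha_2)\in\cA_{Y\setminus X}$; similarly $B\in\cA_Y^+$ commutes with $u(\alpha_1)$, and also with the conjugated $u(\alpha_1)^* A u(\alpha_1)$, which remains even and supported in $\Lambda\setminus Y$. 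Rearranging, each summand factors as $[u(\alpha_1)^* A u(\alpha_1)]\cdot[u(\alpha_2)^* B u(\alpha_2)]$, and summing separately the normalization identifies the first factor with $\bE_{X\cup Y}^\Lambda(A)$ and the second with $\bE_{X\cup Y^c}^\Lambda(B)$. The second equality of \eqref{ProductProp} follows from the same argument with the decomposition adapted to $Z$: $\alpha = (\alpha_1',\alpha_2')$ with $\alpha_1'\in I_{\Lambda\setminus(X\cup Z)}$ and $\alpha_2'\in I_{Z\setminus X}$.

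For (iii), decompose $\alpha\in I_{\Lambda_2\setminus X}$ as $(\alpha_1,\alpha_2)$ with $\alpha_1\in I_{\Lambda_1\setminus X}$ and $\alpha_2\in I_{\Lambda_2\setminus \Lambda_1}$. Because $A\in\cA_{\Lambda_1}^+$ is even and $u(\alpha_2)\in\cA_{\Lambda_2\setminus\Lambda_1}$ has disjoint support from $A$, Proposition~\ref{prop:dis_sup}(i) gives $u(\alpha_2)^* A u(\alpha_2) = A$, so every summand reduces to $u(\alpha_1)^* A u(\alpha_1)$; the free sum over $\alpha_2$ produces a multiplicity $4^{|\Lambda_2\setminus \Lambda_1|}$ that absorbs the difference between the two normalizations, yielding $\bE_X^{\Lambda_1}(A)$. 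The main obstacle is (i): tracking the signs introduced when reordering the generators and verifying the uniform multiplicity with which each $\delta$ is produced; both difficulties reduce to the site-by-site group-like structure of $\{u_x^{(i)}\}_{i=0}^3$ modulo $\{\pm\idty\}$.
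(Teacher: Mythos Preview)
Your argument is correct. For parts (ii) and (iii) it is essentially identical to the paper's: in (ii) both split $I_{\Lambda\setminus X}$ into $I_{X^c\cap Y^c}$ and $I_{X^c\cap Y}$ and move the unitaries past $A$ and $B$ using the commutation properties of even observables; in (iii) both use that the $u(\alpha_2)\in\cA_{\Lambda_2\setminus\Lambda_1}$ commute with even $A\in\cA_{\Lambda_1}^+$.

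For part (i), however, your route is genuinely different. The paper does not compute with products $u(\beta)u(\alpha)$ at all. Instead it invokes the already-established identity \eqref{disjoint_decomp} (valid whenever the complements split as a disjoint union) to write $\bE_X^\Lambda = \bE_{X\cup Y^c}^\Lambda\circ\bE_{X\cup Y}^\Lambda$ and $\bE_Y^\Lambda = \bE_{X\cup Y}^\Lambda\circ\bE_{X^c\cup Y}^\Lambda$, then uses the projection property $(\bE_{X\cup Y}^\Lambda)^2=\bE_{X\cup Y}^\Lambda$ from the preceding lemma to collapse the middle factors, and finally recombines via \eqref{disjoint_decomp} again. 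Your approach instead exploits directly that $\{u_x^{(i)}\}_{i=0}^3$ is a group modulo $\{\pm\idty\}$ (isomorphic to $(\bZ/2\bZ)^2$), counts multiplicities site by site, and matches normalizations by inclusion--exclusion. The paper's argument is cleaner once \eqref{disjoint_decomp} and idempotence are in hand; yours is more self-contained, since it does not rely on the projection property, and it makes explicit the group structure the paper only mentions in passing as an alternative proof of idempotence.
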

\begin{proof}
For (i), since $\Lambda \setminus X$ is the disjoint union of $\Lambda
\setminus (X\cup Y)$ and $\Lambda \setminus (X\cup Y^c)$,
\eqref{disjoint_decomp} implies 
\be\label{XY_decomp}
\bE_X^{\Lambda} = \bE_{X\cup Y^c}^\Lambda \circ \bE_{X\cup Y}^\Lambda.
\ee
 Analogously, $\bE_Y^\Lambda = \bE_{X\cup Y}^\Lambda \circ
\bE_{X^c\cup Y}^\Lambda$. Using that $\bE_{X\cup Y}^\Lambda$ is a projection, we find 
\[\bE_X^\Lambda \circ \bE_Y^\Lambda = \bE_X^\Lambda \circ \bE_{X^c\cup
Y}^\Lambda.\] 
The result follows from noticing that $\Lambda \setminus (X\cap
Y)$ is the disjoint union of $\Lambda \setminus X$ and $\Lambda \setminus (X^c
\cup Y)$.

For (ii), we first use \eqref{XY_decomp} to rewrite $\bE_X(AB)$. Then, using
the commutation properties of even observables we find
\begin{eqnarray*}
\bE_{X}^\Lambda(AB) 
	& = &
\frac{1}{4^{|\Lambda \setminus X|}}
\sum_{\substack{\alpha \in I_{X^c\cap Y^c} \\ \beta \in I_{X^c\cap Y}}}
u(\beta)^*u(\alpha)^*A u(\alpha) \, B u(\beta) \\
	& =  &
\frac{1}{4^{|\Lambda \setminus X|}}
\sum_{\substack{\alpha \in I_{X^c\cap Y^c} \\ \beta \in I_{X^c\cap
Y}}} u(\alpha)^*A u(\alpha) \, u(\beta)^* B u(\beta) \\
	& = &
\bE_{X\cup Y}^\Lambda(A) \cdot \bE_{X\cup Y^c}^\Lambda(B). 
\end{eqnarray*}
An analogous argument holds for showing the final equality in
\eqref{ProductProp}.

Finally, (iii), since $A' = \bE_X^{\Lambda_1}(A) \in \cA_X^+$, and
$\cA_X^+ \subseteq \cA_{\Lambda_1}^+\subseteq \cA_{\Lambda_2}^+$, it is clear that
\[\bE_X^{\Lambda_2}(A) = \bE_{\Lambda_1}^{\Lambda_2}(A') = A'.\]
\end{proof}

Recall that the motivation for introducing these conditional expectations was
to produce local approximations of global observables. Since we consider even
interactions, it is sufficient to just localize even observables. The final
result we provide shows that $\bE_X^\Lambda$ does indeed produce local
approximations of even observables.

\begin{lemma}\label{lem:local_approx_f}
Let $A\in \cA_\Lambda^{\rm +}$, $X\subset \Lambda$, and $\epsilon >0$.
If $\Vert [A,B]\Vert \leq \epsilon\Vert B\Vert $ for all $B\in\cA_{X^c}$,
then there exists $A^\prime\in \cA_X^+$ such that $\Vert A-A^\prime\Vert \leq
\epsilon$.
\end{lemma}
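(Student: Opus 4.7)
The natural candidate is $A' = \bE_X^\Lambda(A)$. By the previous lemma, $\bE_X^\Lambda$ maps $\cA_\Lambda^+$ into $\cA_X^+$, so $A' \in \cA_X^+$ as required. It only remains to show $\|A - A'\| \leq \epsilon$ using the commutator hypothesis.

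The plan is to exploit the explicit Krauss form \eqref{def_ELambda}. Writing
\begin{equation*}
A - \bE_X^\Lambda(A) = \frac{1}{4^{|\Lambda\setminus X|}} \sum_{\alpha \in I_{\Lambda\setminus X}} \bigl( A - u(\alpha)^* A u(\alpha) \bigr) = \frac{1}{4^{|\Lambda\setminus X|}} \sum_{\alpha \in I_{\Lambda\setminus X}} u(\alpha)^* [u(\alpha), A],
\end{equation*}
the triangle inequality together with the fact that each $u(\alpha)$ is unitary yields
\begin{equation*}
\| A - \bE_X^\Lambda(A) \| \leq \frac{1}{4^{|\Lambda\setminus X|}} \sum_{\alpha \in I_{\Lambda\setminus X}} \| [A, u(\alpha)] \|.
\end{equation*}
Since each $u(\alpha) \in \cA_{\Lambda\setminus X} \subseteq \cA_{X^c}$, the hypothesis applies term by term to give $\|[A, u(\alpha)]\| \leq \epsilon \|u(\alpha)\| = \epsilon$. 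Summing and dividing by $4^{|\Lambda\setminus X|}$ gives $\|A - A'\| \leq \epsilon$.

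The only subtle point is that among the unitaries $u(\alpha)$, some are odd (whenever the multi-index $\alpha$ has an odd number of entries equal to $1$ or $2$). For such $\alpha$, the commutator $[A, u(\alpha)]$ need not vanish even though $A$ is even and its support might be disjoint from that of $u(\alpha)$; this is precisely why one needs the stronger hypothesis that $\|[A,B]\|$ is small for all $B\in\cA_{X^c}$, not only even ones. However, no estimate is lost here, since the hypothesis is quantitative and uniform in $B$, and the unitarity of $u(\alpha)$ makes the bound $\|[A,u(\alpha)]\| \leq \epsilon$ immediate. Consequently the argument is essentially routine once $\bE_X^\Lambda$ is chosen as the local approximation, and no further obstacle arises.
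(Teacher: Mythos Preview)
Your proof is correct and follows essentially the same approach as the paper: take $A' = \bE_X^\Lambda(A)$, rewrite $A - \bE_X^\Lambda(A)$ using the Krauss form as an average of $u(\alpha)^*[u(\alpha),A]$, and apply the commutator hypothesis to each unitary $u(\alpha)\in\cA_{\Lambda\setminus X}$. Your remark about the odd $u(\alpha)$ nicely explains why the hypothesis must range over all of $\cA_{X^c}$ rather than just even elements.
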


This lemma is proved by a straightforward estimate using $A^\prime=\bE_X^\Lambda(A)$:
\be
\Vert A-\bE_X^\Lambda(A)\Vert =\left\Vert  \frac{1}{4^{|\Lambda\setminus X|}} \sum_{\alpha \in I_{\Lambda\setminus X}} u(\alpha)^* [u(\alpha),A]\right\Vert .
\ee
Since $u(\alpha)\in \cA_{\Lambda\setminus X}$ and is unitary, for all $\alpha  \in I_{\Lambda\setminus X}$, by the assumption in the 
lemma we have the bound
\be
\Vert [A,u(\alpha)]\Vert \leq \epsilon,
\ee
which is sufficient to prove lemma.

Since $A$ is even, using the Krauss forms of $\bE^\Lambda_X$, see \eq{def_ELambda}, and of $\bF^\Lambda_X$, see \eq{nonlocalkrauss} below, 
one can easily check $\bE^\Lambda_X(A) = \bF^\Lambda_X(A)$. Note, however, that the Krauss form for $\bF^\Lambda_X$ on all of 
$\cA_\Lambda$ involves global operators and, as such, is not suitable for the above argument. To present a Krauss form for $\bF^\Lambda_X$, 
we first define $\pi(0)=\pi(3) = 1$, $\pi(1)=\pi(2)=-1$ and for $\alpha\in I_Y, \pi(\alpha)= \prod_{y\in Y} \pi(\alpha_y)$, and denote by $I_Y^\pm$ the sets of $\alpha \in I_Y$ for which $\pi(\alpha) = \pm 1$. Then we have
\be\label{nonlocalkrauss}
\bF_X^\Lambda(A) = \frac{1}{4^{|\Lambda\setminus X|}} \sum_{\alpha \in I_{\Lambda\setminus X}} \tu(\alpha)^* A \tu(\alpha),\quad A\in \cA_\Lambda,
\ee
with 
\be
\tu(\alpha) = \begin{cases} u(\alpha) &\mbox{if } \alpha \in I^+_{\Lambda\setminus X} \\
 \theta_X u(\alpha) &\mbox{if } \alpha \in I^-_{\Lambda\setminus X} 
\end{cases}\ .
\ee
Clearly, the support of $ \theta_X u(\alpha) $ is all of $\Lambda$ and there is no obvious way to apply a Lieb-Robinson bound to estimate 
a commuator with it.

\section{Martingale method for lattice fermions}\label{sec:martingale_method}

\newcommand{\cG}{\mathcal{G}}

Before we address the question of stability of a spectral gap in the presence of 
perturbations, we would like to demonstrate the existence of a wealth of models
for which one can prove the existence of a spectral gap. These models can then serve
as  the `unperturbed model.'

As in the situation of quantum spin systems, it is helpful to 
consider frustration-free models. Examples of frustration-free fermion models in one
dimension are easily found by making use of the Jordan-Wigner transformation
\cite{lieb:1961}, to translate the abundance of frustration-free quantum spin chains
into fermion Hamiltonians. An example of this type is Kitaev's Majorana chain \cite{kitaev:2001}
and also the path of frustration-free fermion chains connecting Kitaev's Majorana 
chain to a family of interacting fermion chains introduced by Katsura et al. \cite{katsura:2015}.

 In Section \ref{sec:stability} we will also introduce 
a class of frustration-free fermion models in arbitrary dimension that have a 
quasi-free ground state and a spectral gap above it. In the latter case, 
for the quadratic Hamiltonians associated to these models, one can often determine a lower
bound for the spectral gap by inspection. For models with interactions of a more
general form, however, proving a volume-independent lower bound for the spectral
gap above the ground state is generally a challenging problem. For spin systems
the martingale method has often been applied to establish a nonzero spectral gap
above the ground state energy
\cite{fannes:1992,nachtergaele:1996,bachmann:2015}.
In this section we give a formulation of the method suited for lattice fermions.
Although the tensor product structure of quantum spin systems is not available,
under general conditions, one can still obtain the commutation relations
required for the method. The approach we introduce here was first used in
\cite{young:2016}.

Let $\Lambda$ be finite and $H_\Lambda \in \cA_\Lambda^+$ be the lattice fermion
Hamiltonian acting on a fermionic Fock space.
Note further that for any self-adjoint, even observable $A\in\cA_\Lambda^+$
that the spectral projections associated to $A$ also belong to $\cA_\Lambda^+$. 
This follows immediately from the fact that the spectral projections can be 
expressed as a polynomial of $A$. {F}rom this, we conclude that the 
spectral projections of two self-adjoint even observables with disjoint support
commute.

Now consider an increasing sequence of non-negative Hamiltonians 
\[H_0 \leq H_1 \leq \ldots \leq H_N\]
with $H_0=0$, $H_N=H_\Lambda$, and $H_n \in \cA_\Lambda^+$ for all $ 0 \leq n \leq N$.
More generally, it is sufficient to consider an increasing, non-negative sequence of 
Hamiltonians for which
\be
c H_N \leq H_\Lambda - E_\Lambda\idty \leq C H_N
\ee
where $c,C>0$ and $E_\Lambda$ is the ground state energy of $H_\Lambda$. When
the conditions of the martingale method are met, it produces a lower bound for the
spectral gap above the ground state energy of $H_N$.

To apply the martingale method, it is necessary that the Hamiltonians $H_n$ each
have a non-trivial kernel. Since these Hamiltonians are
non-negative and increasing, the kernels $\cG_n =\ker H_n$ form a decreasing sequence of subspaces:
\[\cH_\Lambda=\cG_0\supset\cG_1\supset\cG_2 \cdots \supset \cG_N = \ker(H_N).\]

In order to state the assumptions of the martingale method we define 
\be
h_n = H_n - H_{n-1}, \text{ for } n=1,\ldots,N.
\label{defhn}\ee
Clearly, \eq{defhn} implies $H_n=\sum_{k=1}^n h_k$, and $H_n$ is increasing if and only if
$h_k\geq 0$, for all $k=1,\ldots, N$. Furthermore, define
$G_n$ and $g_n$ to be the orthogonal projections onto $\ker H_n$ and $\ker h_n$,
respectively, and let
\be
E_n=\begin{cases} \idty  - G_1 & \text{ if } n=0\\ G_n - G_{n+1} & \text{ if } 1\leq n \leq N-1\\ G_N & \text{ if } n=N\end{cases}.
\ee
It is easy to verify that the $E_n$ are a mutually orthogonal family of
orthogonal projections that form a resolution of the identity, i.e., $E_n^* = E_n$,
$E_n E_m = \delta_{n,m}E_n$, and $\sum_{n=0}^N E_n = \idty$.

\noindent
{\em Assumptions for the Martingale Method:} 

(i) There is a constant $\gamma>0$ such that $h_n\geq \gamma(\idty- g_n)$ for $1 \leq n \leq N$.

(ii) There is an integer $\ell\geq 0$ such that whenever $0 \leq n \leq N-1$ and 
 $k \not\in [n-\ell,n]$, $[E_k, g_{n+1}] = 0$.

(iii) There exists a positive $\epsilon < 1/\sqrt{\ell+1}$, such that $E_n g_{n+1}
E_n \leq \epsilon^2 E_n$, for $0 \leq n \leq N-1$.

\begin{theorem}[Martingale Method]
Suppose that Assumptions (i)--(iii) hold for a sequence of Hamiltonians $H_n$,
$n=0,\ldots, N$ as described above. If $\psi\in\cH_\Lambda$ such that
$G_N\psi=0$, then 
$$ \langle \psi, H_N \psi\rangle \geq \gamma
(1-\epsilon\sqrt{1+\ell})^2\Vert \psi\Vert^2.
$$
\end{theorem}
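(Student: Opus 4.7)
The plan is to reduce the theorem to a Hilbert-space inequality for the remainders $\eta_n := (\idty - g_{n+1})\phi_n$, where $\phi_n := (\idty - G_{n+1})\psi = \sum_{k=0}^n E_k\psi$. First I would observe that for any $k \geq n+1$ one has $\ran E_k \subseteq \ran G_{n+1} \subseteq \ker h_{n+1}$, since $H_k \geq H_{n+1} \geq h_{n+1} \geq 0$ implies the chain of kernel inclusions. Hence $h_{n+1}\psi = h_{n+1}\phi_n$, and by self-adjointness $\langle \psi, h_{n+1}\psi\rangle = \langle \phi_n, h_{n+1}\phi_n\rangle$. Summing over $n$ and applying assumption (i), together with the fact that $g_{n+1}$ is a projection (so $\langle \phi_n,(\idty-g_{n+1})\phi_n\rangle = \|\eta_n\|^2$), produces
\[
\langle \psi, H_N\psi\rangle \;=\; \sum_{n=0}^{N-1}\langle \phi_n, h_{n+1}\phi_n\rangle \;\geq\; \gamma \sum_{n=0}^{N-1}\|\eta_n\|^2,
\]
so it suffices to establish the purely geometric bound $\sum_n \|\eta_n\|^2 \geq (1 - \epsilon\sqrt{\ell+1})^2 \|\psi\|^2$.

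For this, I would split $\phi_n = \alpha_n + \beta_n$ by setting $\alpha_n := P_{n-\ell-1}\psi$ and $\beta_n := \sum_{k=\max(0,n-\ell)}^n E_k\psi$, where $P_m := \sum_{k=0}^m E_k$ is understood as $0$ for $m < 0$. Assumption (ii) gives $[g_{n+1}, P_{n-\ell-1}] = 0$, so
\[
E_n g_{n+1}\alpha_n = E_n P_{n-\ell-1}\, g_{n+1}\psi = 0
\]
by orthogonality $E_n E_k = 0$ for $k \leq n-\ell-1 < n$. Writing $\psi_n := E_n\psi = E_n\phi_n$, it follows that $\psi_n = E_n g_{n+1}\beta_n + E_n\eta_n$. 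Assumption (iii) implies $\|E_n g_{n+1}\|^2 = \|E_n g_{n+1} E_n\| \leq \epsilon^2$, and the triangle inequality then yields the key pointwise estimate $\|\psi_n\| \leq \epsilon\|\beta_n\| + \|\eta_n\|$.

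To close the argument, square and sum this inequality over $n$. Orthogonality of the $E_k\psi$'s gives $\|\beta_n\|^2 = \sum_{k=\max(0,n-\ell)}^n \|E_k\psi\|^2$, and since each $E_k\psi$ appears in at most $\ell+1$ of the sets $\beta_n$ (namely those with $n \in [k, k+\ell]$), one obtains $\sum_n \|\beta_n\|^2 \leq (\ell+1)\|\psi\|^2$. Setting $P := \|\psi\|$ and $S := \big(\sum_n \|\eta_n\|^2\big)^{1/2}$, the Cauchy--Schwarz inequality on the cross term in $\sum_n(\epsilon\|\beta_n\| + \|\eta_n\|)^2$ leads to the quadratic inequality
\[
P^2 \leq \epsilon^2(\ell+1)P^2 + 2\epsilon\sqrt{\ell+1}\,PS + S^2,
\]
whose unique non-negative root in the variable $t = S/P$ is $t = 1 - \epsilon\sqrt{\ell+1}$, positive thanks to the hypothesis $\epsilon < 1/\sqrt{\ell+1}$. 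Rearranging yields $S^2 \geq (1 - \epsilon\sqrt{\ell+1})^2 P^2$, and combined with the first step this completes the proof. The main obstacle is controlling the off-diagonal action of $g_{n+1}$ on $\phi_n$; assumption (ii) confines the ``bad'' indices to a window of size $\ell+1$, and assumption (iii) supplies the single sharp bound $\|E_n g_{n+1}\| \leq \epsilon$ that drives the Cauchy--Schwarz estimate, with the precise exponent $(1-\epsilon\sqrt{\ell+1})^2$ emerging naturally from the quadratic.
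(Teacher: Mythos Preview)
Your proof is correct, but it is organized differently from the paper's. The paper works directly with $\Vert E_n\psi\Vert^2$: it splits $E_n\psi = (\idty-g_{n+1})E_n\psi + g_{n+1}E_n\psi$, uses assumption~(ii) to replace $\psi$ on the left of the second inner product by $\sum_{k=n_\ell}^n E_k\psi$, and then applies the weighted Young inequality $|\langle\varphi_1,\varphi_2\rangle|\leq \frac{1}{2c}\Vert\varphi_1\Vert^2+\frac{c}{2}\Vert\varphi_2\Vert^2$ twice, with two free parameters $c_1,c_2$ that are optimized at the end to produce the constant $(1-\epsilon\sqrt{\ell+1})^2$. Your route instead first isolates the spectral-gap input as the single reduction $\langle\psi,H_N\psi\rangle\geq\gamma\sum_n\Vert\eta_n\Vert^2$, and then proves a purely geometric lower bound on $\sum_n\Vert\eta_n\Vert^2$ via the pointwise estimate $\Vert E_n\psi\Vert\leq \epsilon\Vert\beta_n\Vert+\Vert\eta_n\Vert$, Cauchy--Schwarz on the cross term, and an explicit factorization of the resulting quadratic $(t+\epsilon\sqrt{\ell+1})^2\geq 1$. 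The two arguments use the assumptions in essentially the same places (assumption~(ii) confines the interaction of $g_{n+1}$ with the $E_k$'s to a window of length $\ell+1$; assumption~(iii) enters as $\Vert E_n g_{n+1}\Vert\leq\epsilon$), but yours avoids the two-parameter optimization by recognizing the perfect square directly, which makes the appearance of the sharp constant more transparent.
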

\begin{proof}
By assumption $E_N \psi = G_N\psi =0$. Hence 
\be
\Vert \psi\Vert^2 = \sum_{n=0}^{N-1} \Vert E_n \psi\Vert^2 .
\ee
Given this, for any $0 \leq n \leq N-1$, we have that 
\bea
\Vert E_n \psi\Vert^2 &=& \langle\psi, (\idty - g_{n+1})E_n\psi\rangle +  \langle\psi, g_{n+1}E_n\psi\rangle\nonumber\\
& = & \langle\psi, (\idty - g_{n+1})E_n\psi\rangle +  \langle\psi, \left(\sum_{k=n_{\ell}}^nE_k\right)g_{n+1}E_n\psi\rangle ,
\eea
where we have set $n_{\ell} = \max(0, n-\ell)$. For the last equality above, 
we inserted the resolution of the identity, used Assumption (ii), and applied the mutual orthogonality of the projections $E_n$.
Two applications of the inequality
\be
\left|\langle \varphi_1,\varphi_2\rangle\right| \leq \frac{1}{2c}\Vert \varphi_1\Vert^2 + \frac{c}{2}\Vert \varphi_2\Vert^2,\ \varphi_1,\varphi_2\in\cH, c>0,
\ee
now show that
\beann
\Vert E_n\psi \Vert^2 &\leq&
 \frac{1}{2c_1}  \langle\psi, (\idty - g_{n+1})\psi\rangle + \frac{c_1}{2}  \langle\psi, E_n\psi\rangle\\ 
&& +  \frac{1}{2c_2} \langle\psi,E_n g_{n+1}E_n\psi\rangle
+  \frac{c_2}{2} \langle\psi, \left(\sum_{k=n_\ell}^{n}E_k\right)^2\psi\rangle.
\eeann
We estimate each of the four terms above as follows. For the first term, we
apply Assumption (i), whereas the second term is immediately combined with the left-hand-side. With the
third term, we use Assumption (iii), and for the fourth term, we again use the mutual
orthogonality of the $E_n$. After some reordering, we have
\be \label{MMOneEn}
(1-\frac{c_1}{2} -   \frac{\epsilon^2}{2c_2}) \Vert E_n\psi\Vert^2 
 - \frac{c_2}{2}\sum_{k=n_\ell}^{n}\Vert E_k\psi\Vert^2 \leq \frac{1}{2c_1 \gamma} \langle\psi, h_{n+1}\psi\rangle.
 \ee
Summing both sides of \eqref{MMOneEn} from $n =0, \ldots, \, N-1$, yields  
\be
\langle \psi ,H_N\psi\rangle \geq 2c_1 \gamma \left[1-\frac{c_1}{2} -   \frac{\epsilon^2}{2c_2}-\frac{c_2(1+\ell)}{2}\right] \Vert \psi\Vert^2. 
\ee
Maximizing this lower bound leads to the choice of
$c_1=1-\epsilon\sqrt{1+\ell}$ and $c_2=\epsilon/\sqrt{1+\ell}$.
This produces the inequality stated in the theorem.
\end{proof}

\section{Discussion: stability, examples, and the spectral flow}\label{sec:stability}

Recently, methods were introduced, see e.g.  \cite{bravyi:2010,michalakis:2013}, which allow for a proof of stability for gapped
ground states of frustration-free quantum spin systems satisfying a Local Topological Quantum Order (LTQO) condition. Here,
the term stability refers to the property that the there is a lower bound of the spectral gap, uniform in $\Lambda$, for finite-volume
Hamiltonians of the form
$$
H_\Lambda(s) = \sum_{X\subset \Lambda} \Phi(X) + s \Psi(X),
$$
where $\Psi$ is any other short-range interaction and $|s|< s_0$ for some $s_0 >0$.
Roughly speaking,
this LTQO condition on $\Phi$ amounts to a precise formulation of the notion that, in this situation, degenerate ground states cannot 
be distinguished by
local operations. In a forth-coming work \cite{nachtergaele:inprep}, a systematic refinement of these techniques will be presented which generalize previous
results. For example, a stability result for gapped ground state phases of quantum spin models with a spontaneously broken discrete symmetry 
is contained in \cite{nachtergaele:inprep}. 

In this section, we give an indication of how analogous results extend to models of lattice fermions.
More details on this extension will be given in \cite{young:inprep}. We begin by defining general frustration-free fermion models. Next, we give a large class of
examples of gapped, frustration-free fermion which may serve as unperturbed models for stability results. Then, we introduce several 
types of symmetry, which may or may not be broken in a given model, and discuss briefly how the LTQO condition should be modified
in this context. Lastly, we discuss the construction of the spectral flow automorphism which is main tool of analysis in proofs of stability. 

In the same way as for quantum spin systems, an even fermion interaction $\Phi$ on $\Gamma$ is defined to 
be {\em frustration-free} if $\Phi$ is finite-range and for $\Lambda \in \cP_0( \Gamma)$ we have
\be
\infspec (H_\Lambda)  =  \sum_{X\subset \Lambda} \infspec (\Phi(X)).
\ee
It is standard to normalize the interactions so that each interaction term is
non-negative, i.e. $\Phi(X) \geq 0$ for each $X \in \cP_0( \Gamma),$ and further
that the ground state energy of $H_{\Lambda}$ vanishes. In this case, a ground
state eigenvector for $H_{\Lambda}$ is necessarily in the kernel of each of its
terms. Since we do not assume the existence of simultaneous eigenvectors
for other energy values, this set-up does not imply that the interaction terms
commute.

We now consider two classes of gapped frustration-free fermion models with finite-range interactions.
As already indicated in Section~\ref{sec:martingale_method}, a first class of examples are 
those one-dimensional fermion systems with finite-range interactions that are mapped to a gapped, frustration-free 
quantum spin chain by the Jordan-Wigner transformation. Note that even interaction terms $\Phi([a,b])\in\cA_{[a,b]}^+$ 
are mapped into $\tilde\Phi([a,b])\in\cA^{\rm spin}_{[a,b]}$. Here {\em evenness} of the interactions is important to preserve its finite-range 
property. It also
implies a local discrete symmetry for the quantum spin model; more on this below. 
The properties of the ground states of such models can be used to verify the conditions of the martingale method discussed in Section \ref{sec:martingale_method}.
This provides a wide class of gapped fermion systems in one-dimension. Another
approach for constructing frustration-free one-dimensional fermion systems
defines fermionic Matrix Product States using graded vector spaces \cite{bultinck:2017}.

A second class of interesting frustration-free fermion models with a spectral
gap is obtained by considering quasi-free systems with two (or more) bands
separated by a gap $\gamma>0$.
For clarity, we will use two sets of fermion creation and annihilation operators
$\{b_k, b^*_k\mid k\in B\}$ and $\{c_l, c^*_l\mid l\in C\}$, labeled by index
sets $B$ and $C$, respectively, which together span $\cA_\Gamma$, where $\Gamma$
is a lattice of arbitrary dimension.
These operators are defined as follows. There are two subsets of
$\ell^2(\Gamma)$, $\{f_k| k\in B\}$, and $\{g_l\mid l\in C\}$ such that: 

(i) $\Vert f_k\Vert = 1, \Vert g_l\Vert =1$, for all $k\in B$ and $l\in C$; 

(ii) $\span (\{f_k \mid k\in B\} \cup \{ g_l\mid l \in C\})$ is dense in
$\ell^2(\Gamma)$; 

(iii) $\langle f_k, g_l\rangle =0$, for all $k\in B$ and $l\in C$; 

(iv) there exists $R\geq 0$, and $x_k, y_l \in\Gamma$, such that $\supp f_k
\subset B_{x_k}(R)$, and $\supp g_l \subset B_{y_l}(R)$, for all $k\in B$ and
$l\in C$. 

 Then, the new creation and annihilation operators are defined by
\bea
b^*_k = \sum_{x\in\Gamma} f_k(x) a^*_x && b_k = \sum_{x\in\Gamma} \overline{f}_k(x) a_x \\
c^*_l = \sum_{x\in\Gamma}g_l(x) a^*_x && c_l = \sum_{x\in\Gamma}\overline{g}_l(x) a_x .
\eea
Next, define an interaction $\Phi$ by setting
\be
\Phi(B_{x_k}(R)) = \idty - b_k^* b_k, \quad \Phi(B_{y_l}(R)) = c_l^* c_l,
\ee
and $\Phi(X) = 0$ if $X$ is not a ball of radius $R$ centered at a site $x_k$ or
$y_l$, for any $k\in B,$ or $l\in C$. In a standard application, the
functions $f_k$ and $g_l$ are the orbitals in the valence and conduction band,
respectively. It is then straightforward to construct a quasi-free state $\omega_\Lambda$ on $\cA_\Lambda$ satisfying 
\be
\omega_\Lambda(b_k^* b_k)=1, \quad  \omega_\Lambda(c_l^* c_l )=0.
\label{fermion_ffgs}\ee 
Clearly, the Hamiltonians
\be
H_\Lambda = \sum_{X\in \subset\Lambda}  \Phi(X)
\ee
are non-negative. Since \eq{fermion_ffgs} implies that $\omega_\Lambda(H_\Lambda)=0$, these models are frustration-free.
The orthogonal complement to the ground state given by $\omega_\Lambda$, is
spanned by the Fock space vectors with at least one hole in the valence band or one particle in the conduction band. In this case, the gap above the ground state energy is 
$\gamma =1 $.

An interesting class of examples of models of this type are the so-called flat-band Hamiltonians studied by Mielke and Tasaki \cite{mielke:1993}.
A thorough study of the conditions under which there exists a spanning set of compactly supported orbitals for a band structure was recently
carried out by Read \cite{read:2017}. He shows that certain band structures in
two or more dimensions, which yield ground states with certain types of
topological order, cannot be spanned by compactly supported orbitals. It is also worth noting that if the goal is to show stability of the gap under small perturbations, one can also deal with some cases in which the bands are spanned by orbitals that are not compactly supported.
If the orbitals are well approximated by compactly supported functions and the gap is not too small, then one can treat the error as part of the perturbation.

We now turn to symmetry. In the classification of gapped ground state phases it is often important to consider symmetries of the model. 
These may or may not be spontaneously broken in the ground state. Common examples of the symmetries we will consider 
include the following:

\begin{enumerate}
\item {\em Parity of the fermion number:} We will always assume that all terms in the interaction are even: $\Phi(X) \in \cA_X^+$. This 
means that all interaction terms supported in $X\in\cP_0(\Gamma)$ commute with $(-1)^{N_X}$.
\item {\em Local symmetries:} A local symmetry is described by a representation
of a finite group $G$ consisting of automorphisms, $\beta_g^x$ for $g\in G$,
acting on $\cA_{\{(x,k)\mid k=1,\ldots, n\}}$, for each $x\in\Gamma$. This invariance can equivalently be described by the commutation of the interaction terms with a unitary representation of $G$: $U_X(g)\Phi(X)=\Phi(X)U_X(g)$. 
The parity of the fermion number is an example of such a discrete local symmetry.
\item {\em Translation invariance:} Often $\Gamma$ is a lattice such as $\cL= \Ir^\nu$ or $\cL= \Ir^\nu/(L \Ir^\nu)$, or $\Gamma$ 
contains a lattice as a factor: $\Gamma = \cL\times \Gamma_1$. In either case,
there is a natural action of $\cL$ on $\Gamma$ which, for simplicity, we will denote by addition. Then, $\cL$ acts on $\cA_\Gamma$ as a group of translation automorphisms $\beta_x, x\in \cL$. Translation 
invariance of the interaction is then expressed by $\beta_x(\Phi(X)) = \Phi(X+x)$.
\item {\em Space inversion and other lattice symmetries:} Besides translations, $\Gamma$ may often possess other discrete symmetries, such as 
inversion ($x\mapsto -x$) or rotation by certain angles.
\item {\em Time reversal invariance:} If there is a basis in Fock space with respect to which the matrix of the Hamiltonian is real, meaning that there is a complex conjugation with which it commutes, we have a symmetry between the forward and backward dynamics, i.e., $\tau_t (\beta(A))
= \beta (\tau_{-t} (A))$, for the corresponding anti-automorphism.
\end{enumerate}

When the goal is to prove stability of the gapped ground 
state phases in cases that allow for a spontaneously broken discrete symmetry,
one is led to assume a slight modification of the 
LTQO condition introduced in \cite{bravyi:2010}.
For example, it no longer makes sense to assume LTQO for arbitrary
local observables; one should restrict attention to those observables that preserve the symmetry. 
Mimicking methods in \cite{nachtergaele:inprep}, one can prove that the spectral gap and the structure of the ground state phases are stable under sufficiently
small perturbations of the interaction, if one additionally assumes that perturbations preserve the symmetry. A precise statement
of this result will appear in \cite{young:inprep}.

Finally, we turn to the main tool used in the recent proofs of stability of gapped phases \cite{bravyi:2010,michalakis:2013}, the so-called {\em spectral flow} 
or {\em quasi-adiabatic continuation} \cite{hastings:2005, bachmann:2012}. To define the spectral flow, consider a one-parameter family
of interactions  $\Phi_s: \cP_0(\Gamma)\to \cA_\Gamma$, such that for each $X\in \cP_0(\Gamma)$, $\Phi_s(X)$ is differentiable
with respect to $s\in [0,1]$. In this case, Hamiltonians of the form $H_\Lambda(s) = \sum_{X\subseteq \Lambda}\Phi_s(X)$ 
are defined on any finite volume $\Lambda \in \cP_0(\Gamma)$.

The situation of interest is where the spectrum of $H_\Lambda(s)$ is composed of two parts, $\Sigma_1(s)$ and $\Sigma_2(s)$, 
separated by a gap bounded below by a constant $\gamma > 0$, for all $s\in [0,1]$.  Let $P(s)$ denote the spectral projection of $H_\Lambda(s)$ corresponding to the set $\Sigma_1(s)$.
Due to the spectral gap assumption, general results imply that $P(s)$ is unitarily equivalent
to $P(0)$ for all $s\in [0,1]$, i.e., there exists a curve of unitaries $U(s)$ satisfying
\be \label{follow}
P(s) = U(s) P(0) U(s)^*.
\ee
The spectral flow is constructed in terms of a particular choice of unitaries satisfying (\ref{follow}). 

As discussed e.g. in \cite{bachmann:2012}, one choice for a family of unitaries satisfying (\ref{follow}) 
is obtained as the solution of the Schr\"odinger equation for unitaries $U(s)$, with $s$ playing the role of time.
The Hamiltonian for this Schr\"odinger equation, $D_\Lambda(s)$, is  defined by an expression of the following form:
\be \label{sf_gen}
D_\Lambda(s) = \int_{-\infty}^{\infty} \tau_t^{H_\Lambda(s)}(H'_\Lambda(s)) \, W_\gamma(t) \, dt,
\ee
where $W_\gamma(t)\in L^1(\bR)$ is a well-chosen function which decays faster than any power law as $t\to \pm\infty$.
In terms of the corresponding unitaries, the spectral flow automorphism is then defined as $\alpha_s^\Lambda(A) = U(s)^*AU(s)$ for all $A\in \cA_\Lambda$.

A crucial technical result, used in all proofs of stability, is the fact that this spectral flow
satisfies a Lieb-Robinson bound with a decay function that is explicit. Since the Hamiltonian terms
corresponding to (\ref{sf_gen}) are not strictly local, such an estimate is not a direct application
of known results. With some effort, one can show that if the family of interactions 
$\Phi_s$ and $\Phi'_s$ decay sufficiently fast (usually expressed in terms of an $F$-function that
decays at an exponential rate), then $D_\Lambda(s)$ can be realized as a local Hamiltonian 
associated to an interaction $\Psi_{\Lambda}$
\be
D_\Lambda(s) = \sum_{X\subseteq \Lambda} \Psi_{\Lambda}(X,s) .
\ee
Moreover, there exists an $F$-function, denoted by $F_\Psi$, for which $\| \Psi_{\Lambda}(s)\|_{F_{\Psi}}< \infty$ uniformly in the finite volume $\Lambda$.
As the explicit $F_{\Psi}$ decays sub-exponentially, one obtains Lieb-Robinson bounds, i.e. locality estimates,
for the spectral flow that decay as fast.  

In the construction of the interaction $\Psi_{\Lambda}(\cdot, s)$, for fermions,
one uses the conditional expectation discussed in Section \ref{sec:cond_exp} in combination with the Lieb-Robinson bounds of Section \ref{sec:LRbounds}. As far as we are aware, the first use of a conditional expectation to construct an interaction 
$\Phi$ from a set of local Hamiltonians of a quantum system appeared in \cite{araki:1978}.
To decompose $D_\Lambda(s)$ into strictly local terms, it is crucial that the initial interaction terms $\Phi(X,s)$ be even. 
In this case, $\Psi_\Lambda(\cdot, s)$ is also even. Of course, for models with other symmetries, one must check that 
this localizing operation preserves the symmetry.  Given that this is the case, the proof of stability then proceeds in the 
same way as in the case of quantum spin systems. Explicit estimates and
further applications will be given in \cite{young:inprep}.

\section*{Acknowledgements}
We thank Detlev Buchholz for pointing out to us the use of a conditional expectation in the work of Araki and Moriya \cite{araki:2003}.


%
\providecommand{\bysame}{\leavevmode\hbox to3em{\hrulefill}\thinspace}
\providecommand{\MR}{\relax\ifhmode\unskip\space\fi MR }
\providecommand{\MRhref}[2]{%
  \href{http://www.ams.org/mathscinet-getitem?mr=#1}{#2}
}
\providecommand{\href}[2]{#2}

\end{document}